\pgfplotsset{compat=1.18}
\newcommand{\ket}[1]{ | #1 \rangle }
\DeclarePairedDelimiter\abs{\lvert}{\rvert}%
\theoremstyle{plain}
\newtheorem{theorem}{Theorem}[section]
\theoremstyle{definition}
\newtheorem{definition}[theorem]{Definition} 
\newtheorem{example}[theorem]{Example}
\newtheorem{lemma}[theorem]{Lemma}
\newtheorem{corollary}[theorem]{Corollary}
\newtheorem{proposition}[theorem]{Proposition}
\newcommand\reallywidehat[1]{%
\savestack{\tmpbox}{\stretchto{%
  \scaleto{%
    \scalerel*[\widthof{\ensuremath{#1}}]{\kern-.6pt\bigwedge\kern-.6pt}%
    {\rule[-\textheight/2]{1ex}{\textheight}}
  }{\textheight}%
}{0.5ex}}%
\stackon[1pt]{#1}{\tmpbox}%
}
\begin{document}

\title{Existence and Characterisation of Bivariate Bicycle Codes}

\author{Jasper J. Postema}
\email{jasperjpostema@gmail.com}

\author{Servaas J.J.M.F. Kokkelmans}

\date{\today}

\begin{abstract}
Encoding quantum information in a quantum error correction (QEC) code offers protection against decoherence and enhances the fidelity of qubits and gate operations. One of the fundamental challenges of QEC is to construct codes with \textit{asymptotically good} parameters, i.e. a non-vanishing rate and relative minimum distance. Such codes provide compact quantum memory with low overhead and enhanced error correcting capabilities, compared to state-of-the-art topological error correction codes such as the surface or colour codes. Recently, bivariate bicycle (BB) codes have emerged as a promising candidate for such compact memory, though the exact tradeoff of the code parameters $[[n,k,d]]$ remained unknown. In this Article, we explore these codes by leveraging their ring structure, and predict their dimension as well as conditions on their existence. Finally, we highlight \textit{asymptotic badness}. Though this excludes this subclass of codes from the search towards practical good low-density parity check (LDPC) codes, it does not affect the utility of the moderately long codes that are known, which can already be used to experimentally demonstrate better QEC beyond the surface code.
\end{abstract}

\maketitle

\section*{Introduction}

Quantum error correction will be a crucial ingredient for large-scale \textit{fault-tolerant} quantum computing in the (near) future \cite{shor, danielgottesman, toric, terhal}. Encoding quantum information in error-correcting codes enables logical error rates to be reduced to arbitrarily low levels, provided the code distance $d$ is sufficiently large, with the first experimental realisations having been demonstrated recently \cite{googleAI2}. Stabiliser codes that outperform the surface code generally require a lot of overhead. Codes with the best performance are known as \textit{asymptotically good}, as the number of logical qubits $k$ and distance $d$ scale asymptotically linearly in the number of physical qubits $n$, as opposed to \textit{asymptotically bad} codes where either $k/n$ or $d/n$ vanishes asymptotically. In particular, low-density parity check (LDPC) codes can achieve better parameters by trading off qubit overhead for long-range connectivity requirements. Despite the existence of asymptotically good quantum LDPC codes \cite{ldpc}, the search for good LDPC codes that can be practically implemented in near-term hardware remains a severe challenge.\\

The Calderbank-Shor-Steane (CSS) construction provides an explicit construction to produce\textit{ quantum codes }\cite{CS,S}. This construct takes a pair of classical codes $(C_1, C_2)$ over a finite field as its input and produces a quantum code of $n$ physical qubits that can correct phase and amplitude errors independently. Random binary CSS constructions will not yield good codes because the orthogonality condition
\begin{equation}\label{eq:orth}
    H_XH_Z^\top = 0
\end{equation}
on the parity check matrices imposes a strict restriction on the existence of non-trivial random constructions \cite{alberto}. Therefore, many codes are constructed from a top-down approach and presume a generalised bicycle ansatz \cite{bicycle} that satisfies condition Eq. (\ref{eq:orth}).\\

Recently, \textit{bivariate bicycle} (BB) codes have been proposed as a promising candidate for LDPC quantum codes with possibly a practical transpilation scheme on near-term quantum computing platforms \cite{bravyi, terhalBB, chenplanar}. Such codes belong to the family of\textit{ lifted product codes,} which contains the first asymptotically good quantum codes \cite{ldpc, LP}. It's unknown what the exact trade-off is between the code parameters $[[n,k,d]]$. For any BB code of 'size' $\ell\times m$, one can use blind brute force to find non-trivial codes, though this 
is computationally expensive. Instead, we consider these codes as ideals in a \textit{polynomial quotient ring} generated by two cyclic groups of size $\ell$ and $m$. Studying two cases, where $\ell$ and $m$ are either odd and mutually coprime, yielding \textit{coprime BB codes} \cite{coprime, linpryadko}, or where they are free to take any integer value, yielding \textit{quasi-cyclic BB codes}, we study their code parameters $[[n,k,d]]$, examine under what circumstances these codes exist, and demonstrate \textit{asymptotic badness} as the number of physical qubits $n$ grows to infinity for any sequence of BB codes, showing that the abelian nature of BB codes forms the limiting factor towards asymptotic goodness.\\

In this Article, we characterise the fundamental properties of BB code parameters. In Sec. \ref{sec:coding}, we lay out the relevant background theory of classical and quantum codes. We introduce BB codes in Sec. \ref{sec:bb}, calculate their code dimension and upper bounds for their distance, and we provide conditions on the existence of non-trivial BB codes in Sec. \ref{sec:construct}. We discuss BB code performance in Sec. \ref{sec:performance} by comparing to the rotated surface code, and we highlight the smallest possible BB codes. A summary of our work is presented in Sec. \ref{sec:summary}.

\section{Preliminaries - Coding Theory}\label{sec:coding}

In this Section, we lay out all the relevant mathematical theory to understand our findings. In particular, the structure of polynomial quotient rings plays a crucial role in understand the parameter tradeoff for BB codes. Throughout this Article, we define $[n] = \{1,\ldots, n\}$. Moreover, $\mathbb{F}_q$ is the finite Galois field of $q$ elements, where $q=p^s$ is a prime power, and $\mathbb{F}_q^\times$ denotes all the non-zero elements of $\mathbb{F}_q$. The greatest common divisor is denoted as gcd, the least common multiple as lcm, $\mathbb{Z}_n:=\mathbb{Z}/n\mathbb{Z}$ is shorthand for the ring of integers modulo $n$, and $x\mid y$ means that $x$ is a divisor of $y.$ 

A linear code $C\subseteq\mathbb{F}_q^n$ is a non-empty linear subspace of the $n$-dimensional vector space $\mathbb{F}_q^n$. The vectors $c\in C$ are called codewords. The code dimension $k$ is equal to its dimension as a linear subspace over $\mathbb{F}_q$, and is related to the code cardinality through $|C|=q^k$. Linear codes can be described by a generator matrix $G\in\mathbb{F}_q^{k \times n}$ that generates all codewords through $uG$ for all $u\in\mathbb{F}_q^k$, and a parity check matrix $H\in\mathbb{F}_q^{(n-k)\times n}$ of full rank $n-k$ that annihilates all codewords: $GH^\top = 0.$ The Hamming distance between two vectors $x,y\in\mathbb{F}_q^n$ is given by
\begin{equation*}
    d(x,y) = \abs{\{1\leq i\leq n \mid x_i\neq y_i\}}
\end{equation*}
and corresponds to a distance metric over $\mathbb{F}_q^n$. The Hamming weight $\omega_H(c)$ is the number of non-zero entries of $c\in C$. Then, the minimum distance $d$ of $C$ is the minimum Hamming distance between distinct codewords. For linear codes,
\begin{equation*}
    d = \text{min} \{\omega_H(c) \; | \; c \in C \setminus \{\textbf{0}\}\}
\end{equation*}
is the minimum Hamming weight of any non-zero codeword in the code $C$. Such a code can correct for $d-1$ erasures and $\lfloor\frac{d-1}{2}\rfloor$ errors. Binary codes are ubiquitous, though codes over binary extension fields such as Reed-Solomon codes or codes defined over the Riemann-Roch space of an algebraic curve function field have played a prominent role in classical error correction \cite{reedsolomon, algebraicgeometry}.\\

The CSS construction is a systematic way to produce quantum codes from a pair of classical codes, derived independently by Calderbank, Shor and Steane \cite{CS,S}. It yields families of codes that can independently correct for amplitude and phase errors.

\begin{definition}[\cite{CS, S} CSS code]
    Let $C_2\subseteq C_1 \subseteq \mathbb{F}_q^n$ be two linear codes. Let $\zeta = \exp\left(2\pi i/p\right)$ be a primitive complex $p$-th root of unity and let $\text{tr}_{\mathbb{F}_q/\mathbb{F}_p}:=\text{tr}$ denote the field trace map from $\mathbb{F}_q$ to its prime subfield $\mathbb{F}_p$. For any vector $w\in\mathbb{F}_q^n$ we define the qudit state
    \begin{equation*}
        \ket{c_w} := \frac{1}{\sqrt{|C_1|}}\sum_{c\in C_1}\zeta^{\text{tr} ( c,w )}\ket{c},
    \end{equation*}
    where $(\cdot, \cdot)$ denotes the (Euclidean) inner product over $\mathbb{F}_q$. Then, Calderbank's and Shor's definition of a quantum code associated with the pair $(C_1, C_2)$ is
    \begin{equation*}
        Q^\text{CS}(C_1, C_2) = \{\ket{c_w}\mid w\in C_2^\perp\},
    \end{equation*}
    while Steane's variant is given by
    \begin{equation*}
        Q^\text{S}(C_1, C_2) = \{\ket{w+C_2} \mid w\in C_1\},
    \end{equation*}
    where we define $\ket{w+C_2}:=\frac{1}{\sqrt{|C_2|}}\sum_{c\in C_2} \ket{w+c}$. There exists a bijection between codewords in both definitions, which can be used to show that $Q^\text{CS}(C_1, C_2) = Q^\text{S}(C_2^\perp, C_1^\perp)$, thus both are equivalent definitions.
\end{definition}
Such a code is said to be an $[[n, k, d]]_q$-code, using $n$ qudits to encode $k$ \textit{logical qudits} with distance $d$. From now on, we restrict ourselves to the case of qubits: $q = 2$, and drop the index on $[[n,k,d]]$. Naturally, quantum error correction codes $\mathcal{Q}$ are images of the injective and norm-preserving mapping $\Phi: \mathfrak{H}^k\xhookrightarrow{}\mathfrak{H}^n$, where $\mathfrak{H}^z$ denotes a $z$-qubit Hilbert space of dimension $2^z$. The dimension of the CSS code is calculated as
\begin{equation*}
    k = n - \text{rank}_{\mathbb{F}_2} H_X - \text{rank}_{\mathbb{F}_2} H_Z,
\end{equation*}
while the minimum distance is given by
\begin{equation*}\label{eq:distance}
    d = \text{min}_{c \in C_Z^\perp / C_X \;\text{or}\; c \in C_X^\perp / C_Z} \,\omega_H(c) = \text{min}(d_X, d_Z).
\end{equation*}
As in the classical sense, such a code can correct for $d-1$ erasures and $\lfloor\frac{d-1}{2}\rfloor$ errors, though the gauge symmetry of the stabilisers allows for a limited number of configurations of errors exceeding the code distance to still be corrected. If $k=0$ (no logical qubits) or $d\leq2$ (no error correcting capabilities), a code is said to be \textit{trivial}. An important and experimentally favourable property of quantum codes is to be \textit{low-density parity check} (LDPC). Let $\{C^\text{CSS}_i\}_{i\in\mathbb{N}}$ be a sequence of CSS codes, then it is LDPC if and only if the row and column weights of its parity check matrices are bounded by $i$-independent constants $\Delta_\text{col}, \Delta_\text{row}.$ Popular choices include $\Delta = 4$ (e.g. surface, toric \cite{toric}, Steane code), $\Delta = 6$ (e.g. honeycomb code, BB codes in this article) and $\Delta = 8$ (e.g. tetrahedral code \cite{tetrahedral, tetrahedral2}).\\

Every logical qubit of a quantum code is endowed with a set of \textit{logical operators} with Hamming weight equal to  or exceeding $d$. The logical Pauli operators $X_L$ and $Z_L$ can be identified with the (co)homology groups
\begin{equation*}
    \text{Hom}_X := \text{ker}\, H_Z\setminus \text{im}\,H_X^\top
\end{equation*}
and
\begin{equation*}
    \text{Hom}_Z := \text{ker}\,H_X\setminus\text{im}\,H_Z^\top.
\end{equation*}

In this paper, we study codes that are closely related to (quasi-)cyclic codes. Here, we briefly review their definitions and structure:

\begin{definition}[(Quasi-)cyclic codes]
    Let $\text{gcd}(q,n)=1$. A linear code $C\subseteq\mathbb{F}_q^n$ is \textit{cyclic} if for every codeword $c=(c_1,\ldots,c_n)\in C$, we have that the cyclic shift operator $T$ preserves the code structure, i.e. $Tc:=(c_n,c_1,\ldots,c_{n-1})\in C$. This operator is subject to $T^n=1.$ Such codes are generated by a univariate polynomial $g(x)\mid x^n-1$, and are ideals in
    \begin{equation*}
        \frac{\mathbb{F}_2[x]}{\langle x^n-1\rangle}.
    \end{equation*}
    A linear code $C\subseteq\mathbb{F}_q^{mn}$ is a \textit{quasi-cyclic} code of index $n$ if for every codeword $c=(c^{(1)}_1,\ldots,c^{(1)}_n,\ldots,c^{(m)}_1,\ldots,c^{(m)}_n)\in C$, we have that the cyclic shift operator $T$ preserves the code structure, i.e. $Tc=(c^{(1)}_n,c^{(1)}_1,\ldots,c^{(1)}_{n-1},\ldots,c^{(m)}_n,c^{(m)}_1,\ldots,c^{(m)}_{n-1})\in C$. This operator is subject to $T^n=1.$ The ideal $\langle x^n-1\rangle$ fixes $\mathbb{F}_q^{mn}$, thus we can view $C\subseteq\mathbb{F}_q^{mn}$ as an $\mathbb{F}_q[x]/\langle x^n-1\rangle$-submodule of $\mathbb{F}_q^{mn}.$ A special case is when each block of $m$ symbols itself has cyclic invariance, a subcase of \textit{constacyclic} codes, which are ideals of the ring
    \begin{equation*}
        \frac{\mathbb{F}_2[x_1,x_2]}{\langle x_1^m-1,x_2^n-1\rangle}.
    \end{equation*}
\end{definition}

Since $x^n-1$ and its formal derivative have no common roots if $\text{gcd}(q,n)=1$, we can find a unique factorisation into square-free irreducible monic polynomials called \textit{minimal polynomials}. It is easier to understand cyclic codes in terms of these minimal polynomials, compared to the ideal $x^n-1$. Under the Chinese remainder theorem, we can find a ring isomorphism that maps the polynomial quotient ring to smaller constituents:

\begin{theorem}[Chinese remainder theorem]
   Consider a ring $\mathcal{R}$ and fix two-sided ideals $I_1,\ldots, I_N$ that are pairwise coprime, then the Chinese remainder theorem (CRT) states that
        \begin{equation*}
            \frac{\mathcal{R}}{\bigcap_i I_i}\cong \frac{\mathcal{R}}{I_1} \oplus \cdots \oplus \frac{\mathcal{R}}{I_N}.
        \end{equation*}
\end{theorem}

\begin{example}
    Let $\text{gcd}(q,n)=1$, and take $\mathcal{R}$ to be the polynomial quotient ring $\mathbb{F}_q[x]/\langle x^n-1\rangle$. If we expand $x^n-1$ into a set of monic square-free irreducible polynomials $f_1(x)\cdots f_\eta(x)$, then the CRT states that
    \begin{equation*}
        \frac{\mathbb{F}_q[x]}{\langle x^n-1\rangle} \cong \bigoplus_{i \in [\eta]} \frac{\mathbb{F}_q[x]}{\langle f_i(x)\rangle }
    \end{equation*}
    is a ring isomorphism given by $p(x)\mapsto \{p(x) \text{ mod } f_i(x)\}_{i=1}^\eta$.
\end{example}

Minimal polynomials can either come in reciprocal pairs, or they are self-reciprocal, where we define reciprocity: 

\begin{definition}[Reciprocity]
    The \textit{reciprocal} or \textit{conjugate} of a polynomial $p(x)$, denoted as $p^\star(x)$, is defined as
    \begin{equation*}\label{eq:conj}
        p^\star(x) = {x^{\text{deg} (p)}}p\left(\frac{1}{x}\right).
    \end{equation*}
    We call a polynomial \textit{symmetric} or \textit{self-reciprocal} if $p(x)=p^\star(x)$, and asymmetric otherwise. Two polynomials $p(x)\neq q(x)$ such that $p(x)=q^\star(x)$ are called a conjugate pair. Conjugation is distributive, i.e. $(p(x)q(x))^\star = p^\star(x)q^\star(x)$.
\end{definition}

One can show that every minimal polynomial is either self-reciprocal $(g_i(x))$ or part of a self-reciprocal doublet $(h_i(x), h_i^\star(x))$, so that
\begin{equation*}
    x^n-1= g_1(x)\cdots g_s(x)h_1(x)h_1^\star(x)\cdots h_t(x)h_t^\star (x),
\end{equation*}
where $\eta=s+2t$. Let $\mathbb{G}_i:=\mathbb{F}_q[x]/\langle g_i(x)\rangle$, $\mathbb{H}_j':=\mathbb{F}_q[x]/\langle h_j(x)\rangle$ and $\mathbb{H}_j'':=\mathbb{F}_q[x]/\langle h_j^\star(x)\rangle$. By the CRT, the decomposition becomes
\begin{equation*}
    \frac{\mathbb{F}_2[x]}{\langle x^n-1\rangle}\cong \left(\bigoplus_{i=1}^s \mathbb{G}_i\right)\oplus\left(\bigoplus_{j=1}^t \left(\mathbb{H}_j'\oplus \mathbb{H}_j''\right)\right).
\end{equation*}
The number of irreducible minimal polynomials is equal to
\begin{equation*}
    \eta = \sum_{d\mid n}\frac{\varphi(d)}{\text{ord}_d(2)},
\end{equation*}
where $\varphi(\cdot)$ is the Euler totient function, and $\text{ord}_d(2)$ is the multiplicative order of 2 modulo $d$, i.e. the smallest integer $e$ such that $2^e\equiv 1$ mod $d$.

\begin{definition}[Cyclotomic polynomial and coset]
    Let $\mathbb{F}_q$ be the finite field of $q$ elements and let $n$ be a positive integer such that $\text{gcd}(q,n) = 1$. The $n$-th \textit{cyclotomic polynomial} is the unique monic  polynomial that is a divisor of $x^n-1$, but not of $x^m-1$ for all $m<n$. Let $\zeta$ be an $n$-th root of unity in an extension field of $\mathbb{F}_q$. Then we can write
    \begin{equation*}
        \Phi_n(x) = \prod_{\substack{i\leq k \leq n \\ \text{gcd}(k,n)=1}}\left(x-\zeta^k\right).
    \end{equation*}
    Over $\mathbb{F}_q$, these cyclotomic polynomials are not necessary irreducible. We can write cyclotomic polynomials in terms of \textit{minimal polynomials} $f_j(x)$ which are $\mathbb{F}_q$-irreducible. The roots of a minimal polynomial can be written as $\{\zeta^i\}_{i\in C_s}$, where $C_s$ is the \textit{cyclotomic coset}
    \begin{equation*}
        C_s :=\{s,qs,q^2s,\ldots,q^{\text{ord}_n(2)-1}s\}.
    \end{equation*}
\end{definition}

Furthermore, we require the following two definitions:

\begin{definition}[Polynomial order]
    The \textit{order} of a univariate polynomial $p(x)$ is the smallest integer $e$ such that $p(x) \mid 1+x^e$. A cyclotomic polynomial $\Phi_n(x)$ has order $n$ by definition, and so do all of its minimal polynomials.
\end{definition}

\begin{definition}[Splitting field]
    The \textit{splitting field} of the $\mathbb{F}_2$-polynomial $1+x^n$ is the smallest field extension of $\mathbb{F}_2$ over which the polynomial decomposes into linear factors, given by $\mathbb{F}_{2^{\text{ord}_n(2)}}$.
\end{definition}

\section{Bivariate Bicycle codes}\label{sec:bb}

Bivariate bicycle (BB) codes have recently been proposed as a candidate for compact quantum memory \cite{bravyi}. They admit a \textit{representation} in terms of cyclic matrices, which we shall lay out here, though we stress that we do not need this exact representation to understand the structure of these codes nor to find the code parameters $k$ and $d$. We define the $\ell\times \ell$ cyclic shift operator $S_\ell\in\mathbb{F}_2^{\ell\times\ell}$ component-wise as follows:
\begin{equation*}
    \left[S_{\ell}\right]_{ij} = 
\begin{cases}
1 & \text{if } j = i + 1 \text{ mod } \ell, \\
0 & \text{otherwise},
\end{cases}
\end{equation*}
and define the bivariate variables $x$ and $y$ as tensor products of the shift and identity matrices:
\begin{equation*}
    x = S_\ell\otimes \mathbb{I}_m \quad \text{and} \quad y = \mathbb{I}_\ell \otimes S_m. 
\end{equation*}
These variables define bivariate $\mathbb{F}_2$-polynomials\footnote{When we capitalise $A,B$, we refer to the matrix representation, while $a,b$ refer to the polynomials. The kernel of $A$ corresponds to the ring annihilator of $a$, for example.}
\begin{equation*}
    A:=a(x, y) = x^{a_1} + y^{a_2} + y^{a_3}\quad\text{and}\quad
    B:=b(x, y) = y^{b_1} + x^{b_2} + x^{b_3}
\end{equation*}
under the identification $x^\ell=1$ and $y^m=1$. This \textit{trinomial ansatz} is inspired by near-term hardware constraints, and enforces the code to be LDPC. Then, the parity check matrices of these codes are given by means of horizontal stacking, yielding the \textit{generalised bicycle} ansatz
\begin{equation*}\label{eq:ansatz}
    H_X = [A | B] \quad \text{and} \quad H_Z = [B^\top | A^\top],
\end{equation*}
a structure that automatically satisfies $H_X H_Z^\top = 0$: the required commutative relationship between the parity check matrices of a CSS code.  As a consequence, every stabiliser is at most weight-6 regardless of the choice of polynomials\footnote{The stabiliser weight is exactly 6 if none of the terms in $A(x,y)$ and $B(x,y)$ cancel out, which we will later see is required for the code dimension to be $>0$.}. The transposed of a polynomial $p(x,y)$ can be viewed as the mapping
\begin{equation*}
    x^i\xrightarrow{\top} x^{\ell-i}\quad \text{and} \quad y^j\xrightarrow{\top} y^{m-j}
\end{equation*}
that generalises \textit{reciprocity} to bivariate polynomials. Key to what makes BB codes favourable candidates for implementation on physical hardware on near-term quantum computers is their simple structure. Ref. \cite{bravyi} elaborates on the graph structure of the Tanner graph that underlies BB codes, and shows it has a thickness 2, so that we can decompose it as
\begin{equation*}
    E = E_1 \sqcup E_2,
\end{equation*}
where $E_i$ denotes a set of edges spanning a thickness-1 (planar) graph. This motivates us to view BB codes as having a \textit{wheel structure}, as highlighted in Fig. \ref{fig:bb}. Transpilation schemes that leverage this graph decomposition may possibly be efficient.\\

\begin{figure*}
    \centering{
    \includegraphics[width=0.98\textwidth]{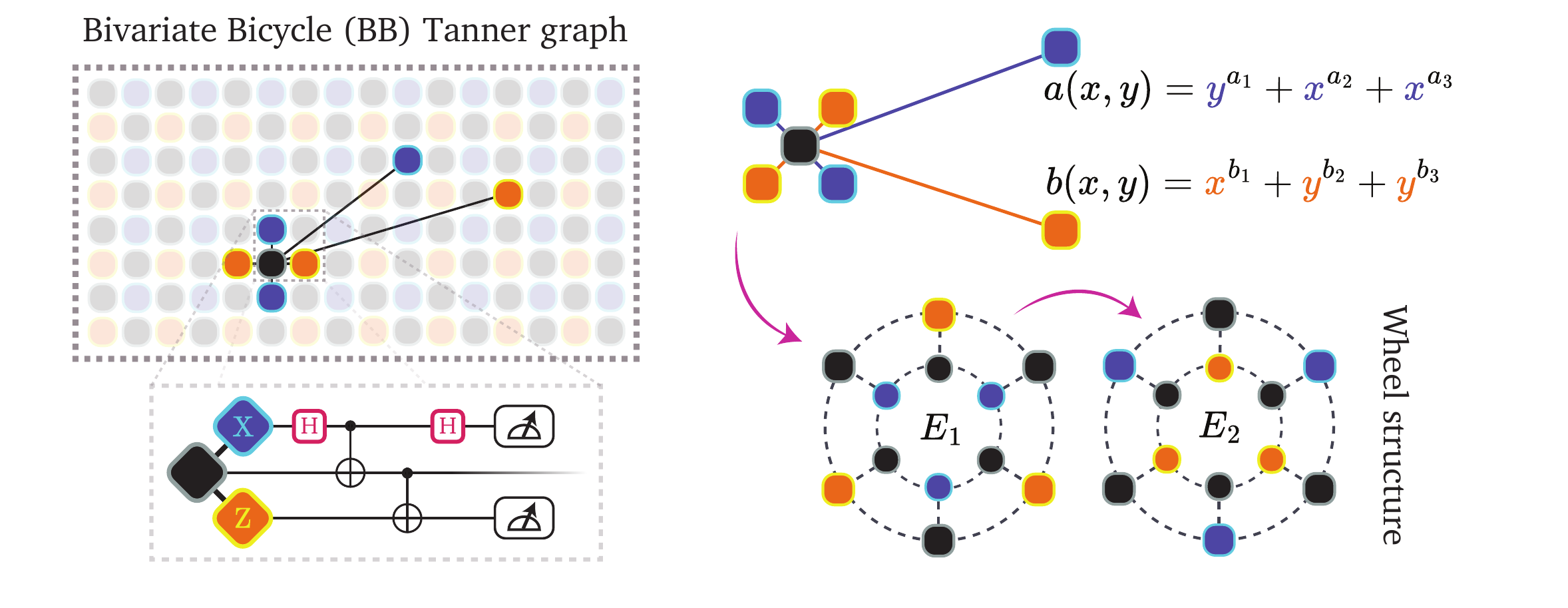}
    \captionsetup{justification=Justified}
 \caption{Tanner graph of a BB code, where \textbf{black} squares denote data qubits, and blue/golden squares denote stabiliser qubits. Two polynomials $a,b$ define the overall connectivity. The trinomial ansatz enforces the code to be LDPC with weight-6 stabilisers. There always exists a graph permutation such that one can rearrange the grid in such a way that every qubit has four nearest neighbours in the 2D plane, and two non-local neighbours across the grid. Though the overall Tanner graph $\mathcal{G}=(E,\mathcal{V})$ is non-planar, there exists a graph/edge partitioning $E = E_1\sqcup E_2$ such that every qubit is part of at most two wheel-like planar structures. Stabilisation is facilitated by the usual 2-qubit interactions (the CNOT gate).}
    \label{fig:bb}
    }
\end{figure*}

The generalised bicycle ansatz structure allows us to extract $k$ and $d$ from the matrices $A$ and $B$.  In particular,
\begin{equation*}
    k = 2\,\text{dim}(\text{ker}\, A \cap \text{ker}\, B)    
\end{equation*}
and the code distance $d$ is
\begin{equation*}
    d = \text{min}\{\omega_H(c) \mid c\in\text{ker}\, H_X\setminus\text{im}\,H_Z^\top \}.
\end{equation*}
Henceforth, we will refer to the octet $\mathfrak{o} = \{\ell, m, a_1, a_2, a_3, b_1, b_2, b_3\}$ as the \textit{constructors} of the BB code. Since there is plenty of freedom in the choice of constructors, the natural question arises:\\

\begin{center}
\textit{What constructors lead to non-trivial BB codes?}
\end{center}
\vspace{0.2cm}

This question will be answered throughout the remainder of this paper. One thing to note is that different codes might give rise to the same parameters, i.e. the notion of \textit{equivalent codes} that have a different constructor octet $\mathfrak{o}$, but equivalent code parameters \cite{linpryadko}. If two polynomials are \textit{equivalent}, then the codes they generate are also equivalent:

\begin{definition}[Equivalence]\label{def:equiv}
    Two polynomials $p(x,y)$ and $q(x,y)$ are said to be \textit{equivalent} ($p\equiv q$) if they are equal up to multiplication by a factor of $x$ and/or $y$. For example, in the bivariate polynomial ring quotiented by $\langle x^6-1,y^6-1\rangle$, 
    \begin{equation*}
        xy^3+x^3\equiv 1+x^4y^3.
    \end{equation*}
    Another example in the univariate ring modulo the ideal $\langle z^{15}-1\rangle$: \begin{equation*}
        1+z+z^2\equiv 1+z+z^{14},
    \end{equation*}
    as they differ by a factor $z^{14}=z^{-1}$. The code parameters of BB codes are invariant under equivalence of the constructor trinomials.
\end{definition}

BB codes are a subclass of \textit{two-block group algebra} (2BGA) codes \cite{linpryadko}, defined over the (quasi-)cyclic commutative group algebra $\mathcal{G} = \mathbb{Z}_\ell\times\mathbb{Z}_m$. Their parity check matrices can be interpreted as \textit{quasi-cyclic codes} or \textit{constacyclic codes}, which can be understood as ideals in the polynomial quotient ring
\begin{equation*}
    \mathcal{S} = \mathbb{F}_2[\mathcal{G}] = \frac{\mathbb{F}_2[x,y]}{\langle x^\ell-1, y^m-1\rangle},
\end{equation*}
the ring of bivariate polynomials over $\mathbb{F}_2$ in $x$ and $y$, under the identification $x^\ell=y^m=1$, equipped with the standard basis of monomials $\{x^\mu y^\nu\}$ for $\mu\in [\ell],\nu\in[m]$. In contrast to the definitions of $k$ and $d$ in Ref. \cite{bravyi}, which employ linear algebra in $\mathbb{F}_2$, we can ask the question if these parameters can be understood from more elegant principles, in particular the ring structure of $\mathcal{S}$. If $\ell$ and $m$ are coprime, yielding \textit{coprime BB codes}, then the group algebra is semi-simple by virtue of the identity
\begin{equation*}
    \mathbb{Z}_\ell\times\mathbb{Z}_m \cong \mathbb{Z}_{\ell m} \quad\text{if }\text{ gcd}(\ell,m)=1, 
\end{equation*}
and can therefore be understood as a code generated by a univariate polynomial. If $\text{gcd}(\ell,m)\neq 1$, we have \textit{quasi-cyclic BB codes}\footnote{This is a bit of a misnomer since the codes that generate the CSS quantum code are themselves not quasi-cyclic. Nonetheless we stick to this terminology.}.

In the remainder of this Section, we will prove 3 properties. First, coprime BB codes can always be understood as an ideal of a univariate polynomial quotient ring. Their dimension can easily be determined from the greatest common divisor of the parity check generators. Both of these properties have already been proven to a degree in Ref. \cite{coprime}, so here we provide a more precise characterisation. Then we characterise the code dimension of one-sided and two-sided repeated root codes, which are sub-categories of quasi-cyclic BB codes that we will define later. We employ upper bounds on the code distance due to recent results by Bravyi and Terhal \cite{bravyidim,bravyidim2}, Wang and Pryadko \cite{generalisedbicycle}, and Arnault et al. \cite{generalisedbravyiterhal}, and demonstrate asymptotic badness. These theorems comprise a full characterisation of BB codes, and are then used to find new codes. After this Section, we characterise under which circumstances the constructors $\mathfrak{o}$
yield non-trivial codes.

\begin{theorem}[Univariate generator] Let $\ell$ and $m$ be coprime and odd (i.e. coprime with the field characteristic). Then, BB codes allow for a univariate representation. More concretely, there exists a bijection $\psi$ such that
\begin{equation*}
    \psi: \frac{\mathbb{F}_2[x,y]}{\langle x^\ell-1, y^m-1\rangle} \to \frac{\mathbb{F}_2[z]}{\langle z^{\ell m}-1\rangle},
\end{equation*}
where $z=\psi(xy)$. This mapping is carried out by
\begin{equation*}
    \psi: x \mapsto z^{m^{-1_\ell}m},\quad  y \mapsto z^{\ell^{-1_m}\ell}
\end{equation*}
where $m^{-1_\ell}$ is the multiplicative inverse in $\mathbb{Z}_\ell$ (and $\ell^{{-1}_m}$ is defined in a similar fashion).
\end{theorem}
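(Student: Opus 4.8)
The plan is to recognise both quotient rings as group algebras over $\mathbb{F}_2$ and to obtain $\psi$ as the $\mathbb{F}_2$-linear extension of a group isomorphism furnished by the Chinese remainder theorem (Appendix \ref{app:a}). First I would note the identifications $\mathbb{F}_2[x,y]/\langle x^\ell-1,y^m-1\rangle \cong \mathbb{F}_2[\mathbb{Z}_\ell\times\mathbb{Z}_m]$, with $x\leftrightarrow(1,0)$ and $y\leftrightarrow(0,1)$, and $\mathbb{F}_2[z]/\langle z^{\ell m}-1\rangle \cong \mathbb{F}_2[\mathbb{Z}_{\ell m}]$, with $z\leftrightarrow 1$. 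Since $\gcd(\ell,m)=1$, the map $\chi:\mathbb{Z}_{\ell m}\to\mathbb{Z}_\ell\times\mathbb{Z}_m$, $t\mapsto(t\bmod\ell,\,t\bmod m)$, is an isomorphism of additive groups; in particular $\mathbb{Z}_\ell\times\mathbb{Z}_m$ is cyclic of order $\ell m$, generated by $(1,1)$, which corresponds to $z=xy$ exactly as anticipated in the statement. Any group isomorphism extends uniquely to an $\mathbb{F}_2$-algebra isomorphism of the associated group algebras, so $\psi$, namely the linear extension of $\chi^{-1}$, is the desired map; this also promotes the claimed ``bijection'' to a genuine ring isomorphism, which is what the rest of the paper exploits.

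Next I would pin down the explicit exponents. The inverse CRT map sends $(1,0)$ to the unique residue $t\in\mathbb{Z}_{\ell m}$ with $t\equiv 1\pmod\ell$ and $t\equiv 0\pmod m$; writing $t=ms$ forces $ms\equiv 1\pmod\ell$, i.e. $s\equiv m^{-1_\ell}$, hence $\chi^{-1}(1,0)=m^{-1_\ell}m$. Symmetrically $\chi^{-1}(0,1)=\ell^{-1_m}\ell$. Pushing these through the group-algebra identifications yields precisely $\psi:x\mapsto z^{m^{-1_\ell}m}$, $y\mapsto z^{\ell^{-1_m}\ell}$. Here coprimality is what guarantees the inverses $m^{-1_\ell}$ and $\ell^{-1_m}$ exist; the oddness hypothesis is not actually needed for this particular statement, but it is what makes $\mathcal{S}$ semisimple (Maschke) and is invoked in the later results.

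As a self-contained cross-check I would also verify $\psi$ directly, bypassing the abstract extension lemma. It descends to the quotient because $\psi(x)^\ell=z^{\ell m^{-1_\ell}m}=(z^{\ell m})^{m^{-1_\ell}}=1$ and likewise $\psi(y)^m=1$, so the defining relations are killed; it is a ring homomorphism by construction on $\mathbb{F}_2[x,y]$. For surjectivity, note that $m^{-1_\ell}m+\ell^{-1_m}\ell\equiv 1$ modulo both $\ell$ and $m$, hence $\equiv 1\pmod{\ell m}$ by CRT, so $\psi(xy)=z$; thus $z$ lies in the image. Since $\dim_{\mathbb{F}_2}$ of source and target are both $\ell m$, surjectivity forces bijectivity, and the inverse is the analogous homomorphism determined by $z\mapsto xy$.

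I do not expect a genuine obstacle here: the content is essentially the Chinese remainder theorem, and the only care needed is bookkeeping — reading all exponents consistently modulo $\ell m$, and stating the conclusion as a ring isomorphism rather than a bare set bijection, since it is the transported ring structure (and the resulting reduction to a univariate quotient) that drives the dimension and distance analyses that follow.
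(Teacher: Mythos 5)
Your proof is correct, and it rests on the same underlying fact as the paper's --- coprimality of $\ell$ and $m$ plus the Chinese remainder theorem --- but it is organised quite differently and proves strictly more. The paper's proof works backwards from the ansatz $x=z^t$: it deduces $t=\lambda m$ with $\lambda m\equiv 1\pmod\ell$ and then spends its entire effort on showing that the inverse $m^{-1_\ell}$ exists (applying CRT to $\mathbb{Z}_\ell$ itself); it never actually verifies that the resulting $\psi$ is well defined on the quotient, is a ring homomorphism, or is bijective. You instead apply CRT at the level of the groups, $\mathbb{Z}_{\ell m}\cong\mathbb{Z}_\ell\times\mathbb{Z}_m$, identify both quotients as group algebras, and transport the group isomorphism to the algebras --- and then you independently check descent to the quotient ($\psi(x)^\ell=\psi(y)^m=1$), surjectivity ($\psi(xy)=z$ via $m^{-1_\ell}m+\ell^{-1_m}\ell\equiv 1\pmod{\ell m}$), and bijectivity by equality of $\mathbb{F}_2$-dimensions. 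These are exactly the steps the paper's proof leaves implicit, so your version closes the gaps and correctly upgrades the stated ``bijection'' to a ring isomorphism, which is what the subsequent dimension and distance arguments actually use. Your observation that oddness is not needed for this particular isomorphism (only for semisimplicity later) is also correct. One small point in the paper's favour-by-comparison: its derivation of the exponent contains a transposition (it writes $\lambda\ell\equiv 1\pmod m$ where the stated map requires $\lambda m\equiv 1\pmod\ell$), which your CRT bookkeeping gets right.
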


\begin{proof}
    Since $\langle x \rangle$ and $\langle y \rangle$ are cyclic groups of order $\ell$ and $m$ respectively, one can see that $\langle xy \rangle$ is a cyclic group of order $\ell m$. Let $x=z^t$. Since there is no dependency on $y$, $t$ must be a multiple of $m$, so we can write it as $\lambda m$ for some $\lambda \in \mathbb{N}$. Then, $\lambda$ must satisfy $\lambda \ell = 1$ mod $m$, thus $\lambda = \ell^{-1}$ mod $m$. Now we prove this inverse exists and is well-defined. If $\ell$ is a prime (power), then $\mathbb{Z}_\ell$ is a finite field (extension), in which a multiplicative inverse exists by the definition of a field. If $\ell$ is not a prime-power, it is the ring of integers $\{0,1,\cdots,\ell-2,\ell-1\}$. By the Chinese remainder theorem, we can always decompose this as follows: let $\ell = p_1^{s_1}\ldots p_N^{s_N}$ be the prime decomposition of $\ell$. Then,
\begin{equation*}
    \frac{\mathbb{Z}}{\ell\mathbb{Z}} \cong \frac{\mathbb{Z}}{p_1^{s_1}\mathbb{Z}}\times \cdots \times \frac{\mathbb{Z}}{p_N^{s_N}\mathbb{Z}}.
\end{equation*}
Since $m$ and $\ell$ are coprime, they share no prime factors. Thus $m$ is not a zero of any of the constituent fields, and a unique  multiplicative inverse exists. In a very similar fashion, we can define how $\psi$ maps $y$ to $\psi(y)\in\frac{\mathbb{F}_2[z]}{\langle z^{\ell m}-1\rangle}$.
\end{proof}

The mapping from the trinomial ansatz (\ref{eq:ansatz}) to the univariate representation is \textit{injective}, so that any arbitrary univariate trinomial is not necessarily restricted to the original ansatz. For example, we could pick an arbitrary $\mathbb{F}_2[z]$-polynomial, which maps to $1+xy+x^2$ under $\psi^{-1}$. \textit{However}, throughout the rest of the Article, we will continue to stick to arbitrary univariate trinomials as \textit{the ansatz} for our codes, regardless whether they satisfy the original bivariate ansatz (\ref{eq:ansatz}) or not. We will also equip the ring $\mathcal{R}=\frac{\mathbb{F}_2[z]}{\langle z^{\ell m}-1\rangle}$ with the standard basis of monomials $\{z^\mu\}$ for $\mu\in[\ell m].$ Because multiplication of the trinomials by a factor $z$ does not change the fundamental structure of $a(z)$ or $b(z)$, we will henceforth set their lowest power to 0 without loss of generality. Given these polynomials, the code dimension readily follows:

\begin{theorem}[\cite{bicycle, coprime,degeneratequantum}, Coprime code dimension]
    The code dimension of a coprime BB code is given by $k^\text{BB} = 2\,\textnormal{deg} \, g(z)$, where $g(z)$ is the generator of the parity check matrices
    \begin{equation*}
        g(z) = \textnormal{gcd}\left(a(z), b(z), z^{\ell m}-1\right).
    \end{equation*}
\end{theorem}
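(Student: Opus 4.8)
The plan is to recast the linear-algebra formula $k = 2\dim_{\mathbb{F}_2}(\ker A \cap \ker B)$ from above in the ring language of the univariate representation and then evaluate the $\mathbb{F}_2$-dimension of an intersection of annihilator ideals in $\mathcal{R} = \mathbb{F}_2[z]/\langle z^{\ell m}-1\rangle$; write $N = z^{\ell m}-1$ and $n = \ell m$. First, under the isomorphism $\psi$ of the univariate generator theorem the cyclic matrices $A$ and $B$ act as the $\mathbb{F}_2$-linear multiplication maps $c\mapsto a(z)c$ and $c\mapsto b(z)c$ on $\mathcal{R}$, so $\ker A\cap\ker B$ is carried onto $\operatorname{Ann}_{\mathcal{R}}(a)\cap\operatorname{Ann}_{\mathcal{R}}(b)$; as $\psi$ is an $\mathbb{F}_2$-algebra isomorphism it preserves $\mathbb{F}_2$-dimension, so it is enough to show $\dim_{\mathbb{F}_2}\bigl(\operatorname{Ann}_{\mathcal{R}}(a)\cap\operatorname{Ann}_{\mathcal{R}}(b)\bigr) = \deg h$.

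Next, I would exploit that, since $\ell$ and $m$ are odd and coprime, $n$ is coprime to the characteristic $2$, so $N$ is squarefree and $\mathcal{R}$ is semisimple. Then the ideals of $\mathcal{R}$ are precisely the principal ideals $\langle f\rangle$ with $f\mid N$; one checks that $\langle f\rangle$ consists of the residues divisible by $f$, whence $\dim_{\mathbb{F}_2}\langle f\rangle = n-\deg f$, the inclusion $\langle f_1\rangle\subseteq\langle f_2\rangle$ is equivalent to $f_2\mid f_1$, and consequently $\langle f_1\rangle\cap\langle f_2\rangle = \langle\operatorname{lcm}(f_1,f_2)\rangle$. A short divisibility argument also gives, for any $a\in\mathcal{R}$, that $a c\equiv 0\pmod N$ iff $\tfrac{N}{\gcd(a,N)}\mid c$, so $\operatorname{Ann}_{\mathcal{R}}(a) = \bigl\langle\tfrac{N}{\gcd(a,N)}\bigr\rangle$, and similarly for $b$.

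Finally, combining these with $g_a=\gcd(a,N)$ and $g_b=\gcd(b,N)$,
\[
\operatorname{Ann}_{\mathcal{R}}(a)\cap\operatorname{Ann}_{\mathcal{R}}(b) = \Bigl\langle\operatorname{lcm}\!\bigl(\tfrac{N}{g_a},\tfrac{N}{g_b}\bigr)\Bigr\rangle = \Bigl\langle\tfrac{N}{\gcd(g_a,g_b)}\Bigr\rangle ,
\]
using the identity $\operatorname{lcm}(N/u,N/v)=N/\gcd(u,v)$ for $u,v\mid N$, which holds because $N$ is squarefree; and $\gcd(g_a,g_b)=\gcd(a,b,N)=h(z)$, so this ideal has dimension $n-\deg(N/h)=\deg h$, giving $k^{\mathrm{BB}}=2\deg h$. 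The computation is routine once semisimplicity is in hand — the only genuine use of the coprime-and-odd hypothesis is the squarefreeness of $z^{\ell m}-1$, which is exactly what makes the ideal lattice of $\mathcal{R}$ coincide with the divisor lattice of $N$ and thereby validates both the dimension count $\dim\langle f\rangle=n-\deg f$ and the $\operatorname{lcm}$-of-complements identity. The one bookkeeping subtlety to watch is the direction of the divisor/inclusion correspondence (smaller divisors yield larger ideals), so that an intersection of ideals corresponds to an $\operatorname{lcm}$ of generators and the $\gcd$ equal to $h$ only re-emerges after passing to the complementary divisors $N/g_a$ and $N/g_b$.
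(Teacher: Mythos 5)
Your argument is correct, and it reaches the result by a route that is dual to the paper's. The paper starts from $k = n - \mathrm{rank}\,H_X - \mathrm{rank}\,H_Z$ with $n = 2\ell m$, observes that the image of $H_X$ is the principal ideal $\langle a\rangle + \langle b\rangle = \langle h\rangle$ in $\mathcal{R}$, reads off $\mathrm{rank}_{\mathbb{F}_2}H_X = \ell m - \deg h$, and concludes. You instead start from the other formula given in the same section, $k = 2\dim(\ker A\cap\ker B)$, transport it through $\psi$ to $\operatorname{Ann}_{\mathcal{R}}(a)\cap\operatorname{Ann}_{\mathcal{R}}(b)$, and compute that intersection explicitly as $\langle N/h\rangle$ via $\operatorname{Ann}_{\mathcal{R}}(a)=\langle N/\gcd(a,N)\rangle$ and the lcm-of-complements identity. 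The two computations are rank--nullity mirror images of each other; yours is more detailed and makes explicit the annihilator structure that the paper later reuses when writing down the basis of $\ker H_X$ for the distance analysis, whereas the paper's is shorter because the sum of two principal ideals being $\langle\gcd\rangle$ is immediate. One small remark: you attribute more to squarefreeness than is needed. Since $\mathbb{F}_2[z]$ is a PID, the ideals of $\mathcal{R}$ are the $\langle f\rangle$ with $f\mid N$, with $\dim\langle f\rangle = n-\deg f$, and the identity $\operatorname{lcm}(N/u,N/v)=N/\gcd(u,v)$ holds for arbitrary $N$ by comparing exponents of irreducible factors; so none of the steps in your dimension count actually requires $N$ squarefree. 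The coprime-and-odd hypothesis is genuinely used only to obtain the univariate representation $\psi$ in the first place (semisimplicity matters elsewhere in the paper, for the root-based distance bounds, not here).
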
 

\begin{proof} Let $u(z), v(z) \in \mathcal{R}$ be two polynomials. Then, any element in $\text{im}\, H_X$ can be written as the block vector
\begin{equation*}
    [a(z)u(z)\mid b(z)v(z)],
\end{equation*}
where $\mid$ denotes concatenation. Since $\frac{\mathbb{F}_2[z]}{\langle z^{\ell m}-1\rangle}$ is a univariate polynomial quotient ring, the image of $H_X$ forms a principal ideal, and is therefore generated by $g(z)$. For cyclic codes, we find $\text{rank}_{\mathbb{F}_2}H_X = \ell m - \text{deg}\,g(z).$ As $k^\text{BB} = n - 2\cdot\text{rank}_{\mathbb{F}_2}H_X$, the coprime BB code dimension yields
\begin{equation*}
    k^\text{BB} = 2\,\text{deg} \, g(z).
\end{equation*}
This concludes the proof.
\end{proof}

For the quasi-cyclic case, we must find a suitable basis for the ideals generated by $a(x,y)$ and $b(x,y)$ first. First, we consider \textit{one-sided repeated root} codes, where we set $\ell$ to be an odd integer ($\text{gcd}(\ell, 2) = 1$), and $m$ is any positive integer. Then we fix an isomorphism from $\mathcal{S}$ to its constituents (under the Chinese remainder theorem) as follows \cite{LP}: 

\begin{lemma}[$\mathcal{S}$-ring isomorphism]
    Let $x^n-1=\prod_{j=1}^\eta f_j(x)$, where for each $1\leq j\leq \eta$, $f_j(x)$ is a monic irreducible polynomial in $\mathbb{F}_2[x]$. Define $\mathcal{S}_j:=\frac{\mathbb{F}_2[x,y]}{\langle f_j(x), y^m-1\rangle}$. Then,
    \begin{equation*}
        \psi:=(\psi_1,\ldots,\psi_\eta):\mathcal{S}\to\bigoplus_{j=1}^\eta \mathcal{S}_j
    \end{equation*}
    is a valid ring isomorphism, where we define $\psi_j:\mathcal{S}\to\mathcal{S}_j$ to act on polynomials $p(x,y)$ as
    \begin{equation*}
        \psi_j(p(x,y))=p(x,y)\text{ mod } f_j(x).
    \end{equation*}
\end{lemma}
\begin{proof}
    The image of $H_X$ forms an ideal in the ring $\mathcal{S}$. Since $\mathcal{S}\to\bigoplus_{j=1}^\eta \mathcal{S}_j$, the image of $H_X$ is a direct sum of ideals in $\mathcal{S}_j$ according to
    \begin{equation*}
        \mathcal{S}=\frac{\mathbb{F}_2[x,y]}{\langle x^\ell-1,y^m-1\rangle}\cong\bigoplus_{j=1}^\eta\frac{\frac{\mathbb{F}_2[x]}{\langle f_j(x)\rangle}[y]}{\langle y^m-1\rangle}:=\bigoplus_{j=1}^\eta \mathcal{S}_j.
    \end{equation*}
    Defining the mapping $\psi_j:\mathcal{S}\to\mathcal{S}_j$, we see that $\psi=(\psi_1,\ldots,\psi_\eta)$ forms a valid ring isomorphism.
\end{proof}

\begin{lemma}[Parity check generator]
    Codes, defined over the ring $\mathcal{S}$, with a parity check matrix $H=[A\mid B]$ have a parity check generator
    \begin{equation*}
        H=\langle g_1(x,y)\prod_{i\neq 1} f_i(x),\ldots, g_\eta(x,y)\prod_{i\neq \eta} f_i(x)\rangle,
    \end{equation*}
    where
    \begin{equation*}
        g_i(x,y)=\text{gcd}(\psi_i(a(x,y)),\psi_i(b(x,y)),y^m-1)\quad\text{in }\frac{\mathbb{F}_2[x]}{\langle f_i(x) \rangle}[y]. 
    \end{equation*} 
\end{lemma}
\begin{proof}
    See Ref. \cite{constacyclic}.
\end{proof}

\begin{theorem}[Quasi-cyclic code dimension]
    The code dimension of a quasi-cyclic BB code is given by $k^\text{BB} = 2\sum_{j=1}^\eta  \textnormal{deg}_y g_j(x,y)\textnormal{deg}_x f_j(x)$, where we defined the generator
    \begin{equation*}
        H=\langle g_1(x,y)\prod_{i\neq 1} f_i(x),\ldots, g_\eta(x,y)\prod_{i\neq \eta} f_i(x)\rangle.
    \end{equation*}
\end{theorem}
\begin{proof}
    By the Chinese remainder theorem, we know that $2^{\text{rank}_{\mathbb{F}_2}(H)} = |\text{im}(H)|=|\psi(\text{im}(H))|=\prod_{j=1}^{\eta}|H_j|=2^{\ell m - \sum_{j=1}^\eta \textnormal{deg}_y g_j(x,y)\textnormal{deg}_x f_j(x)}$. By virtue of $k^\text{BB} = 2\ell m - 2\,\text{rank}_{\mathbb{F}_2}(H)$, we find that $k^\text{BB} = $$2\sum_{j=1}^\eta \textnormal{deg}_y g_j(x,y)\textnormal{deg}_x f_j(x)$.
\end{proof}

Here, we review a specific example where we calculate the code dimension $k^\text{BB}$ from one of the codes in Table 3. of Ref. \cite{bravyi}:

\begin{example}
Let us consider the $[[108,8,10]]$-code with the octet of constructors $\mathfrak{o} = \{6,9,3,1,2,3,1,2\}$. The CRT tells us that
\begin{equation*}
    \frac{\mathbb{F}_2[x]}{\langle x^9-1\rangle}\cong\frac{\mathbb{F}_2[x]}{\langle 1+x\rangle}\oplus \frac{\mathbb{F}_2[x]}{\langle 1+x+x^2\rangle}\oplus \frac{\mathbb{F}_2[x]}{\langle 1+x^3+x^6\rangle}.
\end{equation*}
This gives us an isomorphism $\psi:\mathcal{S}\to\bigoplus_{j=1}^\eta\mathcal{S}_j$ to smaller rings $\mathcal{S}_j$ in which we find the following polynomial decompositions
\begin{equation*}
\begin{aligned}
    y^6 - 1 &= (1+y)^2(1 + y + y^2)^2            &\quad &\text{in } \frac{\mathbb{F}_2[x]}{\langle 1 + x \rangle}[y],\\
    y^6 - 1 &= (1+y)^2(x + y)^2(1 + x + y)^2     &\quad &\text{in } \frac{\mathbb{F}_2[x]}{\langle 1 + x + x^2 \rangle}[y],\\
    y^6 - 1 &= (1+y)^2(x^3 + y)^2(1 + x^3 + y)^2 &\quad &\text{in } \frac{\mathbb{F}_2[x]}{\langle 1 + x^3 + x^6 \rangle}[y].
\end{aligned}
\end{equation*}
This yields the generators
\begin{equation*}
     \begin{aligned}
        g_1(x,y)&=\text{gcd}(\psi_1(a(x,y)),\psi_1(b(x,y)),y^6-1)=1, \\[1pt]
       g_2(x,y)&=\text{gcd}(\psi_2(a(x,y)),\psi_2(b(x,y)),y^6-1)=1+y+y^2, \\[1pt]
        g_3(x,y)&=\text{gcd}(\psi_3(a(x,y)),\psi_3(b(x,y)),y^6-1)=1,
    \end{aligned}
\end{equation*}
so that the image of the parity check matrices is generated by 
\begin{equation*}
    H=\langle (1+x+x^2)(1+x^3+x^6),(1+y+y^2)(1+x)(1+x^3+x^6),(1+x)(1+x+x^2)\rangle,
\end{equation*}
with code dimension $k^\text{BB}=2\,\text{deg}_y(1+y+y^2)\text{deg}_x(1+x+x^2)=8.$
\end{example}

Additionally, we can relax the condition that $\text{gcd}(\ell, 2) = 1$, so that $(\ell, m)$ can be \textit{any} pair of integers, also known as \textit{two-sided repeated-root} codes. In general, (quasi-)cyclic codes are must most cumbersome to analyse when their length is divided by the field characteristic over which they are defined.  The most direct way to calculate the code dimension $k$ is to construct a Gröbner basis and calculate its dimension as a generating set of ideals. First we define a total order for purposes of bivariate polynomial division.

\begin{definition}[Lexicographical ordering]
    Let $\mathcal{M}\subsetneq\mathcal{R}$ be a set of monomials. Then a \textit{monomial order} on the ring $\mathcal{R}$ is a total order $\prec$ such that
    \begin{itemize}
        \item $x\prec y \Rightarrow xz \prec yz$ for all $z\in\mathcal{M},$
        \item $1 \prec x$ for all $1\neq x\in\mathcal{M}.$
    \end{itemize}
    Then the \textit{lexicographical order} is the monomial order defined by
    \begin{equation*}
        x^{\alpha} \prec x^\beta \Leftrightarrow            \text{the first coordinates } \alpha_i \text{ and } \beta_i \text{ which are different satisfy } \alpha_i <\beta_i.
    \end{equation*}
    Hereby, we therefore fix  the order $1\prec y\prec y^2\prec \cdots\prec x\prec xy\prec \cdots \prec x^2 \prec \cdots$. The leading term/monomial of a polynomial $f(x,y)\in\mathcal{R}$ is denoted as $\text{LM}(f)$.
\end{definition} 

Imposing this monomial order, we can properly define polynomial division for bivariate polynomials.

\begin{definition}[Polynomial reduction]
    Let $g\in\mathcal{S}$ be a non-zero polynomial, and let $f,h\in\mathcal{S}$ be arbitrary polynomials. Then we say that $f$ reduces to $h$ modulo $g$ if $\text{LM}(g)$ divides any monomial $X$ in $f$, and $h=f-\frac{X}{\text{LM}(g)}g$. We write this as
    \begin{equation*}
        f\xrightarrow[]{g}h.
    \end{equation*}
    Given a set $G=\{g_1,\ldots, g_s\}\subseteq \mathcal{S}$, we say 
    \begin{equation*}
        f\xrightarrow[]{G}h
    \end{equation*}
    if there exists a subset $\{g_{i_1},\ldots,g_{i_t}\}\subseteq G$ such that
    \begin{equation*}
        f\xrightarrow[]{g_{i_1}}\cdots \xrightarrow[]{g_{i_t}} h.
    \end{equation*}
\end{definition}

\begin{definition}[Gröbner basis]
    A set of non-zero polynomials $G = \{g_1,\ldots, g_s\}$ is called a \textit{Gröbner basis} if for all non-zero $f\in I$, we have that $\text{LM}(g_i)$ divides $\text{LM}(f)$ for some $g_i\in G$. \textit{Buchberger's algorithm} provides a protocol that, given a lexicographical ordering, produces a Gröbner basis with respect to that ordering. The common zeroes of $g_1,\ldots,g_s$ are equal to the common zeroes of $f$ for all $f\in I.$
\end{definition}

\begin{example}
    Let $p(x)$ be a univariate polynomial. Then a Gröbner basis $G$ for $\{p(x),x^\ell-1\}$, for some $\ell\in\mathbb{N}$, is:
    \begin{equation*}
        G = \begin{cases}
            \{g(x)\} & \text{if } g(x):=\text{gcd}(p(x),x^\ell-1)\neq 1\\
            \{1\} & \text{otherwise}
        \end{cases}
    \end{equation*}
\end{example}
\begin{proof}
    If $\text{gcd}(p(x),x^\ell-1)\neq 1$, then $\langle x^\ell-1\rangle \subseteq \langle g(x)\rangle$, and $g(x)$ is the largest degree polynomial under lexicographic ordering with this property. Otherwise, due to Bézout's identity, we see that $p(x)$ and $x^\ell-1$ generate every single monomial, so a suitable Gröbner basis is simply $\{1\}.$
\end{proof}

Thus far, we have enough tools under our belt to introduce an explicit equation for $k^\text{BB}$. Using Gröbner bases, we can readily find the code dimension of two-sided repeated root BB codes:

\begin{theorem}\label{theorem:twosidedk}
    Let $G=\{g_1,\ldots,g_s\}$ be a Gröbner basis for $\{a(x,y),b(x,y),x^\ell-1,y^m-1\}$ and let $\textnormal{LM}(G):=\{\textnormal{LM}(g_1),\ldots,\textnormal{LM}(g_s)\}$ be the set of all leading monomials of the Gröbner basis. Let $\mathfrak{m}$ be the set of monomials not in the span of $\text{LM}(G)$. Then the code dimension of a BB code with generator polynomials $a(x,y),b(x,y)$ is given by
    \begin{equation*}
        k^\textnormal{BB} = 2\,\textnormal{dim}(\mathfrak{m}).
    \end{equation*}
\end{theorem}
\begin{proof}
    Since $k^\text{BB}=2\,\text{dim}(\text{ker}(A)\cap\text{ker}(B))$, we can simply consider the two-sided ideal $\text{ker}(A)\cap\text{ker}(B)$ on its own. In particular, elements of this ideal can be identified as elements of $\mathbb{F}_2[x,y]/\langle a(x,y),b(x,y),x^\ell-1,y^m-1\rangle.$ Thus a reduced Gröbner basis $G$ gives us a basis for the denominator, and by virtue of the Buchberger algorithm, any monomial which preceeds any of the leading monomials in $G$ under lexicographical ordering is not reducible under $G$, and its cardinality provides the size of the ideal $\text{ker}(A)\cap\text{ker}(B)$. The code dimension readily follows.
\end{proof}

\begin{example}
    Consider the code generated by $\mathfrak{o}=\{6,6,3,1,2,3,1,2\}$, then its dimension is 12 \cite{bravyi}. Using the Buchberger algorithm, we can construct a Gröbner basis \begin{equation*}
        G=\{y^3+x+x^2,1+y^3+x+xy+xy^2,1+y^2+y^4\}
    \end{equation*} 
    under lexicographical ordering. Its leading monomials span $\langle\text{LM}(G)\rangle=\langle x^2,xy^2,y^4\rangle$, so $\mathfrak{m}=\{1,x,y,y^2,y^3,xy\}$, and $2\,\text{dim}(\mathfrak{m})=12.$
\end{example}

We highlight that Theorem \ref{theorem:twosidedk} is closely related to one of the fundamental theorems of \textit{algebraic geometry}, namely the weak variant of Hilbert's Nullstellensatz:


\begin{theorem}[\cite{Hilbert}, Weak Nullstellensatz]
\label{nullstellensatz}    Hilbert's weak Nullstellensatz states that for any field $k$ and an algebraically closed field extension $K$ of $k$, an ideal $I\subseteq k[x_1,\ldots,x_n]$ contains 1 if and only if the polynomials in $I$ have no common zeros in $K^n$. 
\end{theorem}


Now we turn to \textit{code distance}, of which useful bounds exist, but it is difficult to obtain very tight bounds \cite{generalisedbicycle}. For coprime BB codes, we can exclude the following semi-trivial cases from our search for codes with a good distance by checking a simple set of equivalence relationships before calculations:

\begin{proposition}[Equal polynomials]\label{cor:d2}
Codes with equal trinomial constructors $A \equiv B$, are trivial (i.e. $d = 2$).
\end{proposition}
\begin{proof}
    If $A\equiv B$, then $[1|1]\in\text{ker}\, H_X$. The rowspace of $H_Z$ is generated by $\text{im}\, H_Z^\top = \lambda(z)[ a^\star(z)|a^\star(z)]$ for some polynomial $\lambda(z)\in\mathcal{R}$. Since $\text{gcd}(a^\star(z), z^{\ell m}-1)\neq 1$, $a^\star(z)$ has no inverse, so there exists no $\lambda(z)\in\mathcal{R}$ such that $[1|1]\in\text{im}\, H_Z^\top$. Thus, $[1|1]\in\text{ker}\, H_X\setminus\text{im}\, H_Z^\top$, and $d = 2$.
\end{proof}

\begin{proposition}[Squared polynomials]\label{cor:d4}
Codes with trinomial constructors that satisfy $A^2\equiv B$, have a distance that is always $d = 4$.
\end{proposition}
\begin{proof}
   If $A^2\equiv B$, then $[a(z)|1]\in\text{ker}\, H_X$, but since $g(z)$ in not invertible in $\mathcal{R}$, we have $[a(z)|1]\notin\text{im}\, H_Z^\top$. Since $|a(z)| =3$ by construct, the code distance satisfies $d = 4$.
\end{proof}

Recently, it has been shown that any abelian two-block group algebra (2BGA) code has a vanishing relative minimum distance \cite{generalisedbravyiterhal}. This proof was realised by showing that there exists a $D$-dimensional space in which all stabiliser qubits have their data qubits contained in a unit-radius ball with respect to a certain metric. We recall an important lemma:

\begin{lemma}[Local CSS code]\label{lemma:dim}
    Any generalised bicycle code with constant row weight $\Delta$ (i.e. independent of the code length $n$) is equivalent to a CSS code that is local in $D\leq \Delta-1$ dimensions, with code parameters satisfying the inequalities
    \begin{equation*}
        d\leq \mathcal{O}(n^{1-1/D})\quad\text{and}\quad kd^{2/(D-1)}\leq\mathcal{O}(n).
    \end{equation*}
\end{lemma}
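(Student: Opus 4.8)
The plan is to invoke the known correspondence between generalised bicycle codes with bounded stabiliser weight and geometrically local CSS codes, and then apply the Bravyi–Poulin–Terhal–type tradeoff bounds for local codes. Concretely: a generalised bicycle code with row weight $\Delta$ is a chain complex over the group algebra $\mathbb{F}_2[\mathbb{Z}_\ell\times\mathbb{Z}_m]$ (or, in the coprime case, over $\mathbb{F}_2[\mathbb{Z}_{\ell m}]$), whose parity checks are supported on a bounded number of group translates. Embedding the group $\mathbb{Z}_\ell\times\mathbb{Z}_m$ into a torus/lattice and then into $\mathbb{R}^D$ for suitable $D$, one sees each check and each qubit touches only $\mathcal{O}(1)$ others within bounded radius, so the code is local in $D$ dimensions. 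The number of "independent directions" needed is governed by the number of distinct monomials appearing in $A$ and $B$, which is at most $\Delta$; hence $D\le\Delta-1$ (the $-1$ coming from the fact that one relation among the translates is free, or equivalently that the trinomial support spans an affine subspace of dimension one less than its cardinality).

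**Next I would** cite the standard volume/isoperimetry argument for $D$-dimensional local stabiliser codes. The distance bound $d\le\mathcal{O}(n^{1-1/D})$ follows because a logical operator must be "spread out": in $D$ dimensions, a region of the lattice that can support a nontrivial logical operator and cannot be locally cleaned must have linear extent $\Omega(n^{1/D})$ in at least one direction, but by a cleaning/expansion argument its total weight is bounded by the surface area of a ball, giving $n^{(D-1)/D}=n^{1-1/D}$. The second inequality $kd^{2/(D-1)}\le\mathcal{O}(n)$ is the Bravyi–Terhal / Bravyi–Poulin–Terhal tradeoff: partition the lattice into $\mathcal{O}(n/r^D)$ boxes of side $r$; each box "kills" at most the number of logicals supportable in a region of its boundary, and optimising the box size $r\sim d^{2/(D-1)}$-ishly against the cleaning lemma yields the stated bound. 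Both are textbook consequences once locality in $D$ dimensions is established, so I would state them as "well-known" and point to the literature rather than reprove them.

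**The main obstacle** I expect is making the embedding into $D\le\Delta-1$ dimensions precise and honest. One must check that the \emph{two} polynomials $A$ and $B$ together — with their combined monomial support — really do embed into $\Delta-1$ dimensions rather than $2(\Delta-1)$ or so, i.e. that sharing the same underlying group $\mathbb{Z}_{\ell m}$ lets the $X$- and $Z$-checks reuse the same geometric directions. The cleanest route is: take all exponents appearing in $a(z)$ and $b(z)$ (after normalising lowest power to $0$, as the paper already does), view them as lattice vectors in $\mathbb{Z}^{\Delta-1}$ via the obvious generators, quotient by the code-length relations, and verify the resulting quotient lattice hosts a geometrically local realisation with $\mathcal{O}(1)$-range checks. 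Handling the quotient (so that "local on $\mathbb{Z}_{\ell m}$" genuinely means "local in $\mathbb{R}^D$" with no wraparound pathologies) is the fiddly part; for the trinomial case $\Delta=6$, $D\le5$, and the tradeoff reads $d\le\mathcal{O}(n^{4/5})$, $kd^{1/2}\le\mathcal{O}(n)$, which is exactly what the subsequent asymptotic-badness argument will need.
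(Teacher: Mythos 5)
Your proposal is correct and matches the paper's intent: the paper itself offers no argument here, deferring entirely to the literature on geometric locality of bounded-weight group-algebra codes and the Bravyi--Poulin--Terhal tradeoff bounds, which is exactly the chain of reasoning you reconstruct (embedding the $\Delta$ monomial translates into a $(\Delta-1)$-dimensional lattice, then applying the standard cleaning/isoperimetry bounds for $D$-local stabiliser codes). Your closing sanity check for $\Delta=6$, $D\leq 5$ also agrees with how the lemma is used in the subsequent asymptotic-badness theorem.
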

\begin{proof}
    For a proof, we refer the reader to Ref. \cite{bravyidim, bravyidim2}.
\end{proof}

As a stronger requirement, the following theorem was posed as a generalisation of the Bravyi-Terhal bound that holds for any abelian 2BGA code of constant row weight $\Delta$:

\begin{theorem}\label{theorem:gen}[\cite{generalisedbravyiterhal}, Generalised Bravyi-Terhal bound]
    The minimum distance of any abelian 2BGA code of length $n$ and stabiliser weight $\Delta$, is bounded from above by
    \begin{equation*}
        d\leq 2\sqrt{\gamma_D}(\sqrt{D}+4)n^{1-1/D},
    \end{equation*}
    whenever $n^{1/D}\geq 8\sqrt{\gamma_D}$, where $D=\Delta-2$ and $\gamma_D$ is the Hermite constant.
\end{theorem}

\begin{theorem}[Asymptotic badness]\label{theorem:bad}
Any sequence of constant-weight BB codes of the form (\ref{eq:ansatz}) is asymptotically bad as we take $n\to\infty$.
\end{theorem}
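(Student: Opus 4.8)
The plan is to show that both the rate $k/n$ and the relative distance $d/n$ cannot simultaneously stay bounded away from zero along any sequence of constant-weight coprime BB codes, so that the asymptotic goodness criterion fails. The cleanest route is to invoke \refLemma{lemma:dim}: every generalised bicycle code with constant row weight $\Delta$ — and coprime BB codes of the form (\ref{eq:ansatz}) have $\Delta \le 6$ — is equivalent to a CSS code that is local in $D \le \Delta - 1 = 5$ dimensions, hence obeys $d \le \mathcal{O}(n^{1-1/D})$ and $k\,d^{2/(D-1)} \le \mathcal{O}(n)$. I would first record that $D$ is a fixed constant independent of $n$ (it depends only on the ansatz weight, not the constructors $\mathfrak{o}$), so these two inequalities hold with $n$-independent implied constants across the whole sequence.

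The second step is the purely arithmetic consequence. From $d \le \mathcal{O}(n^{1 - 1/D})$ we get $d/n \le \mathcal{O}(n^{-1/D}) \to 0$, so the relative minimum distance already vanishes; since $D$ is finite this alone kills asymptotic goodness. For completeness I would also note the rate--distance tradeoff: $k/n \le \mathcal{O}(d^{-2/(D-1)})$, so a non-vanishing rate would force $d = \mathcal{O}(1)$, which combined with the distance bound is consistent but still gives $d/n \to 0$; conversely a non-vanishing $d/n$ forces $d = \Theta(n)$, incompatible with $d \le \mathcal{O}(n^{1-1/D})$ for $D < \infty$. Either way, $\limsup_{n\to\infty} d/n = 0$, so the code family is asymptotically bad.

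I would then add a short remark tying this back to the ring-theoretic picture developed in the paper: by \refTheorem{lemma:div} and \refLemma{lemma:sym}, the generator $h(z)$ is heavily constrained — it must be built from minimal polynomials dividing trinomials, which pins $n$ to have small ``trinomial-friendly'' divisors — so in practice $k$ grows only logarithmically-to-polynomially slowly while $d$ is throttled by the locality bound; this explains \emph{why} the histograms in \refFigure{fig:hist} show the empirical decay of $d/n$, but the rigorous statement needs only \refLemma{lemma:dim}.

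The main obstacle is not in the logic — which is a one-line deduction from \refLemma{lemma:dim} — but in justifying that coprime BB codes genuinely fall under the hypothesis of that lemma, i.e.\ that the trinomial ansatz (\ref{eq:ansatz}) yields a \emph{generalised bicycle code} of constant row weight in the precise technical sense required by Refs.~\cite{bravyidim, bravyidim2}. Concretely I need: (i) the commutation relation $H_X H_Z^\top = 0$ (already guaranteed by the ansatz), (ii) the bicycle block structure $H_X = [A\,|\,B]$, $H_Z = [B^\top\,|\,A^\top]$ with $A,B$ drawn from a finite-dimensional group algebra (here $\mathcal{R} = \mathbb{F}_2[z]/\langle z^{\ell m}-1\rangle$), and (iii) that the weight bound $\Delta \le 6$ is uniform — which holds because each trinomial contributes at most $3$ nonzero entries per row regardless of $\mathfrak{o}$. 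Once these are in place the embedding into $D \le \Delta-1$ Euclidean dimensions is exactly the content of the cited locality results, and the theorem follows immediately.
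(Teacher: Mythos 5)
Your proposal is correct and follows essentially the same route as the paper: both invoke \refLemma{lemma:dim} with $\Delta=6$, hence $D\leq 5$, to get $d/n\leq\mathcal{O}(n^{-1/5})\to 0$, which alone defeats asymptotic goodness. Your additional care in checking that the trinomial ansatz satisfies the hypotheses of the locality lemma, and your remark on the rate--distance tradeoff, are consistent with (and slightly more explicit than) the paper's argument.
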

\begin{proof}
    Since coprime BB codes have stabiliser weight 6, there exists an equivalent CSS code which is local in $D = \Delta -2 = 4$ dimensions. By virtue of Lemma \ref{lemma:dim}, we have
    \begin{equation*}
        \frac{k}{n}\leq \mathcal{O}(d^{-1/2})\quad\text{and}\quad\frac{d}{n}\leq\mathcal{O}(n^{-1/4}).
    \end{equation*}
    Thus, for $n\to\infty$, we have a vanishing relative minimum distance regardless of the constructors.
\end{proof}

This definitely excludes BB codes from being asymptotically good, though we stress that the utility of moderately long codes is not affected, as we will see later on. This leaves the\textit{ open question} whether non-abelian group algebras may generate codes within the generalised bicycle ansatz that produce asymptotically good parameters, or parameters at least as good as those of BB codes. 

We note that these bounds are irrelevant for BB codes of moderate length. Theorem \ref{theorem:gen} only applies for $n\geq 8192$, and the bound is trivially satisfied up until $n\approx 4\cdot 10^5$. Nevertheless, we can still adopt BB codes as an interesting demonstration for practical near-term applications of quantum error correction.

\section{Existence and constructions}\label{sec:construct}

All the previously mentioned theorems together comprise a full characterisation of BB code parameters. To efficiently explore the space of BB code parameters, we can trim down our search for trinomials $A$ and $B$ with good parameters by asking ourselves what polynomials can divide trinomials in the first place. For this reason, we turn to the factorisation of a cyclotomic polynomial $\Phi_p(x)$ over $\mathbb{F}_2$ and study its roots:

\begin{lemma}[\cite{trinomials}]
   For prime $p$ and $\Phi_p(x)=f_1(x)\cdots f_\eta(x)$ factorised into $\eta$ minimal polynomials, if any $f_i(x)$ divides a trinomial, then all of them do.
\end{lemma}
\begin{proof}
    Let $\alpha$ be a primitive root of $\Phi_p(x)$. All the roots of $f_1(x),\ldots,f_\eta(x)$ constitute the powers $\alpha,\alpha^2\ldots,\alpha^{p-1}$ and form a cyclic group under multiplication, i.e. $\alpha^p = 1$. All the $f_i(x)$'s share the same degree and primitivity $p$. Let $f_i(x)$ divide a trinomial $1+x^a+x^b$, for some $i\in[\eta]$. If we fix $\alpha$ to be a root of $f_i(x)$, then $1+\alpha^a+\alpha^b=0$. Let $\beta$ be a root of $f_j(x)$ for $j\neq i$, then we can reparameterise $\alpha$ as $\alpha=\beta^s$ for some $s$ such that gcd$(s,2) = 1$, i.e. $\alpha$ and $\beta$ are elements of distinct cyclotomic cosets. We then have $1+(\beta^s)^a+(\beta^s)^b=0$, so that $f_j(x)$ divides $1+x^{sa}+x^{sb}$. Since $i,j$ were arbitrary, all minimal polynomials either divide trinomials, or none of them do.
\end{proof}

\begin{lemma}\label{lemma:outlier}
    For prime $p$, $\Phi_p(x)=f_1(x)\cdots f_\eta(x)$ is factorised into $\eta$ minimal polynomials $f_i(x)$ that \textit{all} divide trinomials, in two cases only:
    \begin{itemize}
        \item $p$ is a Mersenne prime, i.e. of the form $p=2^s-1$ for some $s\in\mathbb{N}$,
        \item $p$ is an \textit{outlier}, of which the currently known members ($<3\cdot 10^6$) are $73, 121\, 369, 178\,481, 262\,657$ and $ 599\,479.$
    \end{itemize}
\end{lemma}
\begin{proof}
    If $p$ is a Mersenne prime, then $f_i(x)$ are primitive polynomials. Let $\alpha$ be a primitive root of $f_i(x)$, then all powers $1,\alpha,\alpha^2\ldots,\alpha^{p}$ are distinct, and constitute all elements of $\mathbb{F}^\times_{p+1}$. Hence there exists some pair of indices $(s,t)$ such that $1+\alpha^s+\alpha ^t=0$, and $f_i(x)$ is a divisor of $1+x^s+x^t$. Through an extensive computer search, the outliers were found, see Ref. \cite{trinomials} for more details.
\end{proof}

\begin{lemma}\label{lemma:div}
    For non-prime $n$, $\Phi_n(x)=f_1(x)\cdots f_\eta(x)$ is factorised into $\eta$ minimal polynomials $f_i(x)$ that all divide trinomials if and only if $p \mid n$ such that $\Phi_p(x) \mid t(x)$ for some trinomial $t(x)$.
\end{lemma}
\begin{proof}
    Let $n$ be an odd integer. Let $p,r$ be two primes such that $\text{gcd}(p,r)=1$, and let $m\in\mathbb{N}$ be an arbitrary positive integer. Then, two properties of cyclotomic polynomials are \begin{equation*}
    \Phi_{p^mr}(x)=\Phi_{pr}\left(x^{p^{m-1}}\right)
\quad\text{and}\quad
        \Phi_{pr}(x) = \frac{\Phi_p(x^r)}{\Phi_p(x)}.
    \end{equation*}
    These identities allow us to deduce the properties of a cyclotomic polynomial $\Phi_n(x)$ with composite index back to prime index polynomials $\Phi_p(x)$ for all $p\mid n$. Let $\overline{p}$ denote a Mersenne prime or an outlier prime. Let $f_j(x) \mid \Phi_{\overline{p}}(x)$ with associated cyclotomic coset $C_s$. Then there exists a $f'_j(x) \mid \Phi_n(x)$ that divides a trinomial. The roots of $\Phi_n(x)$ are described by the cyclotomic cosets $C_{rs}\setminus C_s$, which is non-empty by virtue of $r$ being coprime to the field characteristic 2. More specifically, if a trinomial $t(x)$ is divided by some minimal polynomial of $\Phi_{\overline{p}}(x)$, then $t(x^r)$ is divided by some minimal polynomial of $\Phi_{\overline{p}}(x^r)$. If $n$ were an even integer, the same argument holds by virtue of every minimal polynomial simply having a higher multiplicity.
\end{proof}

\begin{example}
    The integers 3 and 7 are Mersenne primes, and over $\mathbb{F}_2$ their cyclotomic polynomials read
    \begin{equation*}
        \Phi_3(x) = 1+x+x^2,\quad\text{and}\quad\Phi_7(x)=(1+x+x^3)(1+x^2+x^3),
    \end{equation*}
    which obviously have trinomial divisors. Examples of integers with Mersenne prime factors are 15 and 21, with cyclotomic polynomials
    \begin{equation*}
        \Phi_{15}(x)=(1+x+x^4)(1+x^3+x^4),\quad\Phi_{21}(x)=(1+x+x^2+x^4+x^6)(1+x^2+x^4+x^5+x^6).
    \end{equation*}
    The former obviously has trinomial divisors, but the latter is not as apparent, though this is guaranteed by Lemma \ref{lemma:div}. For instance, we find
    \begin{equation*}
        \text{gcd}(1+x^3+x^9,\Phi_{21}(x)) = 1+x^2+x^4+x^5+x^6.
    \end{equation*}
\end{example}

By understanding the structure of polynomials and their divisors, this leads us to the \textit{prime divisibility theorem}, which provides the unique conditions under which BB codes produce non-trivial codes:

\newpage

\begin{theorem}[Prime divisibility theorem]\label{theorem:primedivisibility}
    Any BB code under the trinomial ansatz, with polynomial constructors $a$ and $b$, of length $2\ell m$ is non-trivial $(k\geq 2)$ if and only if 
    \begin{enumerate}
        \item[] $\textnormal{(Coprime)}$
        \item $g(z):=\textnormal{gcd}(a(z),b(z),z^{\ell m}-1)\neq 1$,
        \item its code length $2\ell m$ is divided by a Mersenne prime or an outlier prime, defined in Lemma \ref{lemma:outlier}.
    \end{enumerate}
    or
    \begin{enumerate}
        \item[] \textnormal{(Quasi-cyclic)}
        \item $g:=\textnormal{gcd}(\ell,m)\neq 1$,
        \item there exist elements $X,Y$ in the algebraic closure of $\mathbb{F}_2$ such that $X^{\ell}=Y^{m}=1$ and $a(X,Y)=b(X,Y)=0$, or in other words, there exist roots $\zeta$ in the common splitting field of $x^\ell-1$ and $y^m-1$, and integers $i,j,k,I,J,K$ such that $\zeta^{gi}+\zeta^j+\zeta^k=0=\zeta^I+\zeta^{gJ}+\zeta^{gK}.$
        \item its code length $2\ell m$ is divided by a Mersenne prime or an outlier prime.
    \end{enumerate}
    
\end{theorem}
\begin{proof}
    For coprime BB codes, the result is trivial, since $\text{gcd}(a(z),b(z),z^{\ell m}-1)$ is only non-trivial if and only if $\ell m$ has a Mersenne prime or an outlier prime as a factor, by virtue of Lemma \ref{lemma:div}. For quasi-cyclic BB codes, we invoke the weak Nullstellensatz (Theorem \ref{nullstellensatz}) predicts a trivial code with Gröbner basis $G=\{1\}$ if there are no common zeroes among $a(x,y), b(x,y)$ and $x^\ell-1,y^m-1$. By virtue of $a,b$ being trinomials, there must exist a pair of roots $(X,Y)$ in the algebraic closure of $\mathbb{F}_2$ such that $a(X,Y)=b(X,Y)=0$, or in other words, they correspond to roots $\zeta$ in their common splitting field such that $\zeta$ satisfies a trinomial equation $1+\zeta^i+\zeta^j=0$ for some $(i,j)$, whose solutions can only exist if $\ell$ or $m$ is divided by a Mersenne prime or an outlier prime. The only requirement for common roots of $x^\ell-1$ and $y^m-1$ to exist is if $g:=\text{gcd}(\ell,m)\neq 1$. Let $\zeta$ be a root of $x^n-1$ as defined before, then there must exist integers $i,j,k,I,J,K$ such that
    \begin{equation*}
        \zeta^{ig}+\zeta^j+\zeta^k = 0 = \zeta^I+\zeta^{Jg}+\zeta^{Kg}.
    \end{equation*}
    This concludes the proof.
\end{proof}

Additionally, we can slightly constrain what code dimensions are allowed under the trinomial ansatz by noting that $k=2$ can not occur:

\begin{lemma}[$k\geq 4$ condition]
    Any non-trivial BB code satisfies $k\geq 4$.
\end{lemma}
\begin{proof}
    For coprime codes, we observe that the only minimal polynomial of degree 1 is $g(z)=1+z$ with root $z=1$. Trinomials can never have a root $z=1$, so that the code is trivial. For quasi-cyclic codes, the only Gröbner basis (under lexicographic ordering $x\succ y$) that would yield a complement $\mathfrak{m}$ of dimension $\text{dim}(\mathfrak{m})=1$ is $G=\{x+F(y),y+1\}$, where $F(y)$ is an arbitrary polynomial in $y$, which would imply $y=1$ being a root of both $a(x,y)$ and $b(x,y)$, which the trinomial ansatz disallows.
\end{proof}



An important aspect of considering the existence of BB codes, given a set of constructors $\mathfrak{o}$, is their connectivity, more specifically whether they yield a fully connected code. Ref. \cite{bravyi} has already pointed out that the novel codes they present \textit{are} fully connected, i.e. their Tanner graphs are not composed of several smaller separable code blocks with a lower code distance. To express this property more rigorously, they presented the following Lemma:

\begin{lemma}
    The Tanner graph of a coprime BB code is fully connected if and only if the set $S := \bigcup_{i,j\in\{1,2,3\}}\{A_iA_j^\top\}\cup\{B_iB_j^\top\}$ generates all possible monomials $\mathcal{M} = \{z^\mu\}_{\mu\in[\ell m]}$ or $\mathcal{M} = \{x^\mu y^\nu\}_{\mu\in[\ell],\nu\in[m]}$ if the code is coprime and quasi-cyclic respectively.
\end{lemma}
\begin{proof}
    A proof based on graph theory is presented in Ref. \cite{bravyi} under Lemma 3. Here, we simply presented the results in either coprime/quasi-cyclic monomial basis $\mathcal{M}.$ An intuitive explanation of this condition is that if a graph is irreducible, then any qubit must be connected to any other qubit through a continuous graph walk.
\end{proof}

For a more general discussion on disconnected 2BGA codes, we refer the reader to Ref. \cite{linpryadko}. Now, we show that this condition is equivalent to certain polynomials being \textit{incommensurate} with respect to the ideal $x^n-1$, and that this affects with generator polynomials are "allowed":

\begin{definition}[Incommensurate polynomial]
    Let $n$ be an arbitrary integer. Then a polynomial $g(z)$ is \textit{incommensurate} (with respect to $x^n-1$, over $\mathbb{F}_2$) if there exist no minimal polynomial $h_0(z)$ and odd constant $t\in\mathbb{N}$ such that $g(z)\equiv h_0(z^t)$ is coprime to $x^n-1$. 
\end{definition}

\begin{corollary}[Coprime connectivity]\label{theorem:connect}
    Let $n$ be an odd integer and let the constructor trinomials be $a(z) = 1+z^\alpha +z^\beta$ and $b(z)=1+z^\gamma+z^\delta$. Then the corresponding Tanner graph is fully connected if and only if $g(z)$ is incommensurate, or $\text{gcd}(\alpha,\beta,\gamma,\delta,n)=1$.
\end{corollary}
\begin{proof}
    Suppose $\text{gcd}(\alpha,\beta,\gamma,\delta,n)= g \neq 1$. Then, $z^g$ is a divisor of $A_i A_j^\top, B_i B_j^\top$ for all $i,j\in\{1,2,3\}.$ Thus $S$ generates $g\mathcal{M}:=\{z^{\mu g}\}_{\mu\in[n/g]} \subset \mathcal{M}$, and the Tanner graph is not fully connected.
\end{proof}

\begin{corollary}[Quasi-cyclic connectivity]\label{theorem:connect2}
    Let $n$ be an odd integer and let the constructor trinomials be $a(x,y) = x^{\alpha}+y^\beta +y^\gamma$ and $b(x,y)=y^{\delta}+x^\epsilon+x^\zeta$. Then the corresponding Tanner graph is fully connected if $g_1:=\text{gcd}(\alpha,\epsilon,\zeta,m)=1$ and $g_2:=\text{gcd}(\beta,\gamma,\delta,\ell)=1$.
\end{corollary}
\begin{proof}
    Suppose 'without loss of generality that $g_1>1$. Then $S=\{x^{g_1}\}$ generates a proper subgroup of $\mathcal{M}$, and the Tanner graph is disconnected. All monomials are therefore generated if and only if $g_1=g_2=1.$
\end{proof}

For coprime codes, we have found many different codes with different generator polynomials $g(z)$. However, upon searching for codes with a self-reciprocal $g(z)$, we only found that $g(z)=1+z+z^2$ suffices. The following Lemmas prove why this is the case.

\begin{lemma}\label{lemma:sym}
    Coprime BB codes with a self-reciprocal generator $g(z)$ satisfy $g(z) = \Phi_{3a}(z)$ for integer $a\in\mathbb{Z}$.
\end{lemma}
\begin{proof}
    Let $g(z)$ be a self-reciprocal polynomial that divides some trinomial $1+z^s+z^t$. By virtue of being self-reciprocal, any root $\alpha$ of $g(z)$ has a conjugate root $\alpha^{-1}$, so that
    \begin{equation*}
        1+\alpha ^s+\alpha^t = 0 =1+\alpha ^{-s}+\alpha^{-t},
    \end{equation*}
    which reduces to $\alpha^{t-s}=\alpha ^s$, or in other words, the order $e$ of $g(z)$ satisfies $e\mid |t-2s|.$ We can then write $t=2s+\lambda e$ for some $\lambda\in\mathbb{N}$, and parameterise the trinomial as
    \begin{equation*}
        1+z^s+z^{2s+\lambda e},
    \end{equation*}
    which can easily be checked to be a polynomial of order $e = 3s$. The only \textit{irreducible} trinomials of this form satisfy $s=3^a$ for $a\in\mathbb{N}$.
\end{proof}

\begin{lemma}[Coprime - self-reciprocal generator]
    Coprime BB codes with a symmetric generator $g(z)$ are connected if and only if $g(z)=1+z+z^2$.
\end{lemma}
\begin{proof}
    Let $g(z)$ be a self-reciprocal polynomial of order $e$, where $3 \mid e$. The only trinomials it divides are of the form $1+z^a+z^{2a+\lambda e}$ for some $\lambda\in\mathbb{N}$ by Theorem \ref{lemma:sym}. The gcd $g$ in Theorem \ref{theorem:connect} is equal to $a$. Thus, its Tanner graph is disconnected for $a\geq 2$, and connected only if $g(z)=1+z+z^2$.
\end{proof}

Consequently, we can only expect coprime BB codes generated by self-reciprocal polynomials if the generator is $g(z) = 1+z+z^2$. Comparing this to the list of known coprime BB codes \cite{coprime}, it explains why no other symmetric generator has ever been found.

\section{Code performance}\label{sec:performance}

Using our formalism, we have been able to construct new codes that were previously unknown. In particular, the divisibility condition (Theorem \ref{lemma:div}) has allowed us to explore new families. So far, every BB code has had either 3 or 7 as a divisor of $n$, which are very "digestible" parameters for experimentally implementing low-length to medium-length codes. It is therefore not surprising that no codes have emerged so far with divisors 31, 73, 127, 8191 etc. through- sheer brute force. The hardware constraints also impose limits on the optimality of codes that are expressible under the BB ansatz (\ref{eq:ansatz}), severely limiting the space of codes we can explore.

Now, we benchmark some new codes by turning to coprime BB codes, because of their simplicity. In general, we can easily pick two polynomials that share a certain divisor, using the divisibility criterion. For coprime BB codes, a general recipe goes as follows:
\begin{enumerate}
    \item Write out the complete irreducible factorisation of $z^{\ell m} - 1$ over $\mathbb{F}_2$.
    \item Pick any minimal polynomial $f_i(z)$ that divides a trinomial and makes the Tanner graph connected, and consider the constituent field $\mathbb{F}_2[z]/\langle f_i(z)\rangle.$ For simplicity, we can set $a(z)=f_i(z)$ \footnote{If $a(z)=f_i(z)r(z)$ for some $r(z) \nmid z^n-1$, then this statement is without loss of generality.}.
    \item Because of the cyclic nature of roots of unity in this field, we find $z^e = 1$, where $e$ is the order of $f_i(z)$. Multiplying any term in the base polynomial by this power ensures that $f_i(z)$ is still a divisor of this new polynomial. \textit{Example:}
    \begin{equation*}
        1+z+z^8=(1+z+z^2)(1+z^2+z^3+z^5+z^6).
    \end{equation*}
    This augmented polynomial is a candidate for $b(z)$.
    \item We can check a priori for certain low distance codes, such as the case $a(z)\equiv b(z), a(z)^2\equiv b(z)$ or $ b(z)/a(z)$ having a low weight. If a code is equivalent to a previously discarded code, we can discard it as well. A Monte Carlo simulation of the logical VS physical error rates yields a numerical approximation of the code distance. Early on during the simulation, we infer an estimate on the code distance, and reject low-distance codes.
    \item We keep generating new candidates, not equivalent to previous candidates, through multiplication by $z^e=1$ in either $a(z)$ or $b(z)$, until we have exhausted all possible candidates.
\end{enumerate}
This protocol also generalises for the quasi-cyclic case. The following tables, Table \ref{tab:1} for coprime codes and Table \ref{tab:2} for quasi-cyclic codes, present some of the best codes we have been able to find using our method. We also explicitly show codes with divisors that have never been shown before, such as 31 or 73. In Fig. \ref{fig:hist}, a collection of known coprime codes are shown, reflecting that for large code lengths $n$, we are less likely to find codes with high error-correcting capabilities. The next section explains how the code distance was retrieved from numerical simulations.

\begin{figure*}
    \centering{
    \makebox[\textwidth][c]{%
        \includegraphics[width=1.2\textwidth]{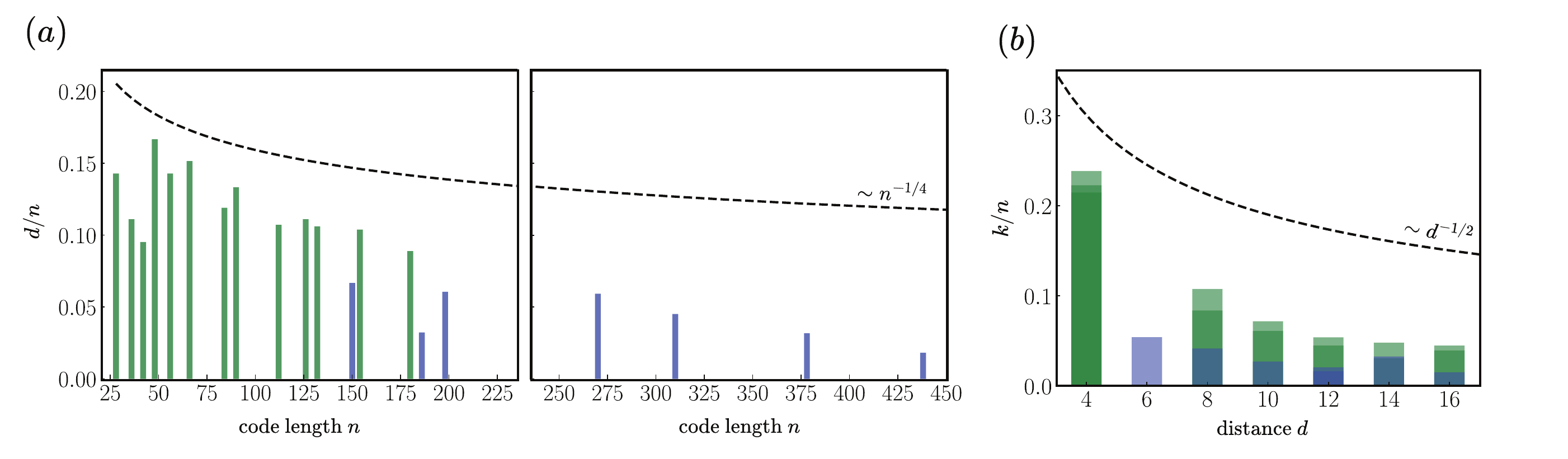}
    }
\captionsetup{justification=Justified}
 \caption{\textbf{(a)} Histograms of coprime BB codes, where the relative minimum distance $d/n$ is graphed against the code length $n$. Known codes (see Ref. \cite{coprime}) are coloured green, and our new codes are highlighted in blue. The asymptotic scaling (dashed line) is plotted to show the behaviour upper bound, and serves as a highlight of the asymptotic behaviour rather than a strict inequality. For larger code lengths, it appears harder to find codes with a relatively high relative minimum distance. \textbf{(b)} Histogram of the rate $k/n$ graphed as a function of the distance $d$. Overlaid colours indicate that multiple codes of the same distance have different rates.}
    \label{fig:hist}
    }
\end{figure*}

\begin{figure}[t]
    \centering{
    \includegraphics[width=0.66\textwidth]{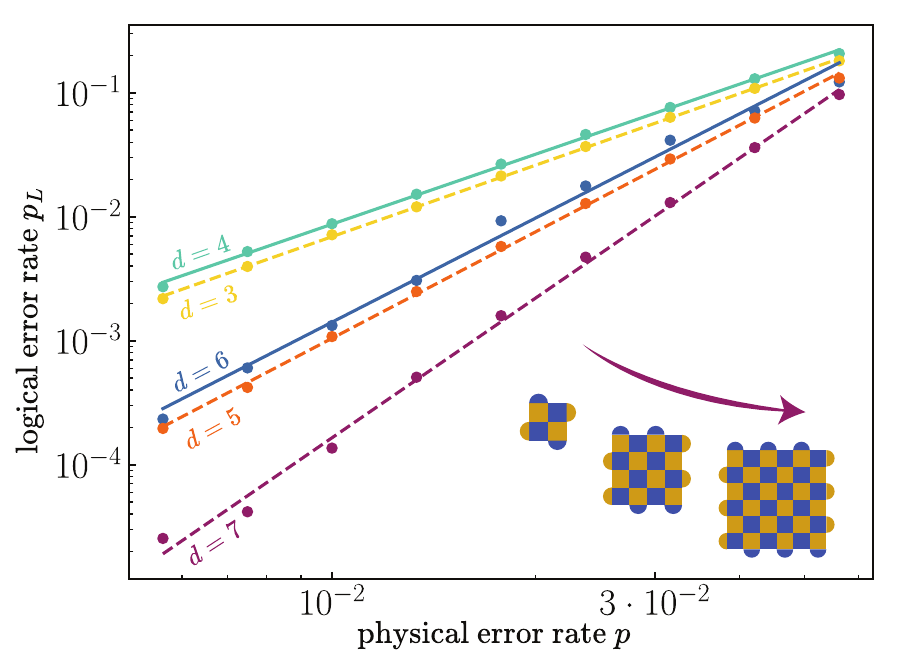}
    \captionsetup{justification=Justified}
    \caption{Comparison of equivalent codes between the $[[30,4,4]]$ and $[[30,4,6]]$ BB codes (highlighted in solid lines), and 4 patches of the $[[9,1,3]]$, $[[25,1,5]]$ and $[[49,1,7]]$ surface codes (highlighted in dashed lines). Codes with equivalent slopes will yield roughly the same performance per logical qubit, though BB codes trade physical qubit overhead for harder connectivity constraints. As an example, the $[[30,4,6]]_\text{BB}$ code and 4 patches of the $[[25,1,5]]_\text{SC}$ code have the same error correcting capabilities, while the latter requires 100 data qubits. The physical error rate range was chosen for convenience of the fit, such that we are below the error rate threshold, but not too low such that Monte Carlo errors skew the results too gravely. The surface code was decoded using MWPM, BB codes were decoded using BP.}
    \label{fig:logbothai}
    }
\end{figure}

\begin{table}[!h]
\begin{center}
\caption{Partial list of new coprime codes we have found. Because of finite size inaccuracies, code distance is accurate up to $\pm 2$.}
\begin{tabular}{|| c | c | c | c | c | c | c||} 
 \hline 
 $\ell$ & $m$ & qubits & $a(z)$ & $b(z)$ & $k$ & $d$\\ [0.5ex] 
 \hline\hline
 3 & 25 & 150 & $1+z+z^2$ & $1+z^2+z^{16}$ & 4 & 10\\
\hline
 9 & 11 & 198 & $1+z+z^2$ & $1+z^5+z^{37}$ & 4 & 12\\
 \hline
  5 & 27 & 270 & $1+z+z^2$ & $1+z^2+z^{25}$ & 4 & 16\\
 \hline
 7 & 27 & 378 & $1+z+z^3$ & $1+z+z^{31}$ & 6 & 12\\
 \hline
 3 & 31 & 186 & $1+z^2+z^5$ & $1+z^2+z^{36}$ & 10 & 6\\
 \hline
 5 & 31 & 310 & $1+z^2+z^5$ & $1+z^5+z^{64}$ & 10 & 14\\
 \hline
 11 & 31 & 682 & $1+z^2+z^5$ & $1+z^2+z^{67}$ & 10 & 14\\
 \hline
 3 & 73 & 438 & $1+z+z^9$ & $1+z^9+z^{74}$ & 18 & 8\\
 \hline
 5 & 73 & 730 & $1+z+z^9$ & $1+z+z^{82}$ & 18 & 10\\
 \hline
 
\end{tabular}
\label{tab:1}
\end{center}

\end{table}


\begin{table}[!h]
\begin{center}
\caption{Partial list of new quasi-cyclic codes we have found. Because of finite size inaccuracies, code distance is accurate up to $\pm 2$. In \textbf{bold}, we highlighted the smallest error-detecting and smallest error-correcting BB codes that exist.}
\begin{tabular}{|| c | c | c | c | c | c | c||} 
 \hline 
 $\ell$ & $m$ & qubits & $a(x,y)$ & $b(x,y)$ & $k$ & $d$\\ [0.5ex] 
 \hline\hline
 $\boldsymbol{3}$ & $\boldsymbol{3}$ & $\boldsymbol{18}$ & $\boldsymbol{1+y+y^2}$ & $\boldsymbol{y+1+x}$ & $\boldsymbol{4}$ & $\boldsymbol{2}$\\
\hline
3 & 3 & 18 & $1+y+y^2$ & $1+x+x^2$ & 8 & 2\\
\hline
$\boldsymbol{6}$ & $\boldsymbol{3}$ & $\boldsymbol{36}$ & $\boldsymbol{1+y+y^2}$ & $\boldsymbol{y+1+x}$ & $\boldsymbol{4}$ & $\boldsymbol{4}$\\
\hline
 7 & 6 & 84 & $1+y+y^2$ & $1+x+x^3$ & 12 & 4\\
\hline
 7 & 7 & 98 & $x^4+y+y^3$ & $y^4+x+x^3$ & 6 & 8\\
\hline
 14 & 7 & 196 & $x^4+y+y^3$ & $y^4+x+x^3$ & 6 & 12\\
\hline
15 & 15 & 450 & $x^5+y^3+y^4$ & $y^3+x^3+x^4$ & 8 & 18\\
\hline
 
\end{tabular}
\label{tab:2}
\end{center}

\end{table}

\subsection*{Performance of BB VS rotated surface code}

To benchmark the performance of these coprime BB codes, we compare their logical error rates to rotated surface codes of comparable code lengths $n$. We opt for 2 decoders: one employing the blossom algorithm using minimum weight perfect matching (MWPM) \cite{blossom}, and a decoder based on belief propagation (BP) \cite{belief}. The latter is most suitable for BB codes because of their non-trivial Tanner graph. From the low-error regime, the code distance can be inferred through the asymptotic relation
\begin{equation*}
    p_L \propto \left(\frac{p}{p_\text{th}}\right)^{t+1},
\end{equation*}
where $p$ denotes the physical error rate, $p_\text{th}$ is the error rate threshold, and $t$ is the number of errors we can correct. For BB codes, distances seem to be always even, so that $t + 1 = \frac{d}{2}$. We compare \textit{equivalent codes}, i.e. codes that share the same dimension $k$ and distance $d$, in terms of the number of physical qubits $n$ and the logical error rate $p_L$. Fixing $n^\text{BB}$, $k^\text{BB}$ and $d^\text{BB}$, the equivalent surface code (SC) would have a distance $d^\text{SC} = d^\text{BB}-1$ (since both give rise to the same number of correctable errors), and we need $k^\text{BB}$ surface code patches to match the number of logical qubits. This requires a total number of physical qubits 
\begin{equation*}\label{eq:compare}
n^\text{SC}=k^\text{BB}(d^\text{BB}-1)^2.
\end{equation*}
Fig. \ref{fig:logbothai} shows a very clear tradeoff between qubit number and connectivity: codes with a more complex graph connectivity will require less qubits to achieve the same number of logicals and error correction capabilities compared to planar rotated surface codes. Because of the square in Eq. (\ref{eq:compare}), BB codes are more economical when their code distance is high. For example, the $[[270,4,16]]_\text{BB}$ code requires about 3.3 times less qubits than 4 patches of the $[[225,1,15]]_\text{SC}$ code, while the $[[60,16,4]]_\text{BB}$ code requires about 2.4 times less qubits than 16 patches of the $[[9,1,3]]_\text{SC}$ code.

We reiterate that the disparity in performance between BB codes and planar topological error-correcting codes such as the surface code is a direct consequence of the higher connectivity, and therefore poses a trade-off:\textit{ the benefit in qubit overhead comes at the cost of non-local connectivity requirements}. There exist different solutions to the relatively difficult issue of implementing non-local connectivity in physical hardware. Either the non-locality is directly implemented in hardware, as has for example been demonstrated in qLDPC codes for trapped ions, using the Mølmer-Sørensen gate or variants \cite{trappedionlong, molmersorensen}, neutral atoms, using Rydberg states that interact over long distances \cite{fold, pecorari}, and fusion-based quantum computers \cite{fusion}, to name a few examples. Alternatively, the non-locality is mimicked through a qubit remapping with easier implementation \cite{chenplanar, fold}, or for example by making use of the thickness-2 property of BB codes to split them into two planar codes \cite{bravyi, jj}.

\section{Summary}\label{sec:summary}
Bivariate bicycle (BB) codes are a recent proposal for low-overhead quantum memory \cite{bravyi}, but so far, the trade-off between $[[n,k,d]]$ has been studied little, except for the case where $\ell$ and $m$, the system size, are coprime \cite{coprime}. Brute force calculations for finding existing non-trivial codes, and their code distance, are very inefficient. Additionally, some more efficient algorithms only provide upper bounds, meaning they can only exclude trivial codes early on from their search, while still brute forcing the rest.\\

In this Article, we have developed a novel and more efficient search algorithm to study BB codes that narrows down the search for constructors $\mathfrak{o}$ that guarantee non-trivial codes, \textit{regardless} of $\ell,m$. We used it to find a series of new codes, some of which have a higher number of logical qubits $k$ than previously known constructions. To answer the main question of this paper, we have provided several theorems. We have shown what number of physical qubits $n$ is required for codes to exist in Theorem \ref{theorem:primedivisibility}, under what circumstances BB codes are fully connected in Corollary \ref{theorem:connect} and Corollary \ref{theorem:connect2}, we have highlighted distance bounds and consequently demonstrated asymptotic badness in Theorem \ref{theorem:bad}. We have therefore shown a significant difference between BB codes and the codes predicted in the landmark results of Panteleev and Kalachev in Ref. \cite{ldpc}, due to the abelian nature of BB codes.\\

Though BB codes are excluded from the search of good practical codes due to asymptotic badness, our results do not affect the utility of moderately long codes. In fact, Lemma \ref{lemma:dim} predicts that a good cyclic code of this ansatz would require arbitrarily large connectivity as $n\to\infty$, yielding impractical implementation. Present-day quantum processors employing $\lessapprox 1000$ qubits could already implement most of the BB codes that are known so far, for example using the protocol outlined in Ref. \cite{chenplanar}. Our methods are also beneficial for designing cyclic LDPC codes for present-day quantum processor sizes and architectures, and they could be used to develop sequences of codes that are the best with respect to either $k/n$ or $d/n$ to date, paving the way for experiments demonstrating error correction routines outperforming the surface code.\\

The next line of research could be to investigate whether non-abelian 2BGA codes allow for asymptotically good sequences of codes, or at least allow for superior code parameters than BB codes of equal length $n$, while maintaining a bounded parity check row weight. This question has been partially answered in Ref. \cite{linpryadko}, though many open problems still persist, such as under what conditions non-abelian codes outperform their abelian counterparts for equal code lengths and group size.

\section*{Acknowledgments}
We thank Fabrizio Conca, Pascal Otjens, Jyrki Lahtonen and Raul Parcelas Resina dos Santos for fruitful discussions. This research is financially supported by the Dutch Ministry of Economic Affairs and Climate Policy (EZK), as part of the Quantum Delta NL programme, and by the Netherlands Organisation for Scientific Research (NWO) under Grant No. 680.92.18.05, and by the the Horizon Europe program HORIZON-CL4-2021-DIGITAL-EMERGING-01-30 via the project 101070144 (EuRyQa).

\section*{Data Availability}
The data that support the findings of this study are available from the corresponding author upon request.

\bibliographystyle{unsrt}
\bibliography{Bibliography}

\end{document}